\newtheorem{theorem}{Theorem}[section]
\newtheorem{lemma}[theorem]{Lemma}
\newtheorem{corollary}[theorem]{Corollary}
\newtheorem{definition}[theorem]{Definition}
\newtheorem{defi/prop}[theorem]{Definition/Proposition}
\newcommand{\N}{\mathbb{N}}
\newcommand{\C}{\mathbb{C}}
\renewcommand{\P}{\mathbf{P}}
\newcommand{\E}{\mathbf{E}}
\newcommand{\Cov}{{\mathbf{Cov}}}
\newcommand{\overbar}[1]{\mkern 1.5mu\overline{\mkern-1.5mu#1\mkern-1.5mu}\mkern 1.5mu}
\renewcommand{\leq}{\leqslant}
\renewcommand{\geq}{\geqslant}
\DeclareMathOperator{\tr}{Tr}
\newcommand{\ketbra}[2]{| #1 \rangle\!\langle #2 |}
\newcommand{\bra}[1]{\langle #1 |}
\newcommand{\ket}[1]{| #1 \rangle}
\title{Optimal quantum (tensor product) expanders from unitary designs}
\date{November 4 2025}
\author{C\'{e}cilia Lancien}
\address{\textbf{C\'{e}cilia Lancien:} CNRS \& Institut Fourier, Universit\'e Grenoble Alpes, 100 rue des maths, 38610 Gi\`eres, France.} 
\email{cecilia.lancien@univ-grenoble-alpes.fr}
\begin{document}
	
	\begin{abstract} In this work we investigate how quantum expanders (i.e.~quantum channels with few Kraus operators but a large spectral gap) can be constructed from unitary designs. Concretely, we prove that a random quantum channel whose Kraus operators are independent unitaries sampled from a $2$-design measure is with high probability an optimal expander (in the sense that its spectral gap is as large as possible). More generally, we show that, if these Kraus operators are independent unitaries of the form $U^{\otimes k}$, with $U$ sampled from a $2k$-design measure, then the corresponding random quantum channel is typically an optimal $k$-copy tensor product expander, a concept introduced by Harrow and Hastings (Quant.~Inf.~Comput.~2009).
	\end{abstract}

\maketitle	
	
\section{Preliminary facts and tools}

\subsection{Brief reminder on quantum channels and their spectral properties} \hfill\vspace{0.1cm}

A quantum channel on $\mathcal M_N(\C)$ is described by a linear map $\Phi:\mathcal M_N(\C)\to\mathcal M_N(\C)$ that is completely positive (CP) and trace-preserving (TP). We recall that the action of a CP map $\Phi$ on $\mathcal M_N(\mathbb C)$ can always be written in the following (non-unique) way, called a Kraus representation (see e.g.~\cite[Section 2.3.2]{Aubrun17}):
\begin{equation} \label{eq:Kraus}
	\Phi:X\in\mathcal M_N(\mathbb C) \mapsto \sum_{s=1}^d K_sXK_s^*\in\mathcal M_N(\mathbb C), 
\end{equation}
for some $d\in\N$ and some $K_1,\ldots,K_d\in\mathcal M_N(\mathbb C)$, called Kraus operators of $\Phi$. The fact that $\Phi$ is TP is equivalent to the following constraint on the $K_s$'s:
\[ \sum_{s=1}^d K_s^*K_s = I. \]
The smallest $d$ such that an expression of the form of equation \eqref{eq:Kraus} for $\Phi$ exists is called the Kraus rank of $\Phi$. The Kraus rank of a CP map on $\mathcal M_N(\mathbb C)$ is always at most $N^2$.

Given a linear map $\Phi:\mathcal M_N(\C)\to\mathcal M_N(\C)$, we denote by $\lambda_1(\Phi),\ldots,\lambda_{N^2}(\Phi)$, resp.~$s_1(\Phi),\ldots,s_{N^2}(\Phi)$, its eigenvalues, resp.~singular values, ordered so that $|\lambda_1(\Phi)|\geq\cdots\geq|\lambda_{N^2}(\Phi)|$, resp.~$s_1(\Phi)\geq\cdots\geq s_{N^2}(\Phi)$. A quantum channel $\Phi$ on $\mathcal M_N(\mathbb C)$ always has a largest (in modulus) eigenvalue that is equal to $1$ (see e.g.~\cite[Proposition 6.1]{Wolf12}), which we will denote by convention $\lambda_1(\Phi)$, and consequently a largest singular value $s_1(\Phi)$ that is at least $1$ . In addition, the eigenvalue $1$ has an associated eigenvector which is a positive semidefinite matrix (see e.g.~\cite[Theorem 6.5]{Wolf12}). Hence $\Phi$ always has a fixed state, i.e.~a quantum state $\hat\rho$ such that $\Phi(\hat\rho)=\hat\rho$. $\Phi$ is said to be unital if one of its fixed states is the so-called maximally mixed state $I/N$. It turns out that, if $\Phi$ is unital, then $s_1(\Phi)$ is also equal to $1$ (see e.g.~\cite[Theorem 4.27]{Watrous18}).

The constraint of being unital is dual to that of being TP. What we mean is that a CP map $\Phi$ is unital if and only if its dual (or adjoint) CP map $\Phi^*$, for the Hilbert-Schmidt inner product, is TP. It thus reads at the level of Kraus operators as 
\[ \sum_{s=1}^d K_sK_s^* = I. \]

Note that, identifying $\mathcal M_N(\mathbb C)$ with $\mathbb C^N\otimes\mathbb C^N$, a linear map $\Phi:\mathcal M_N(\mathbb C)\to\mathcal M_N(\mathbb C)$ can equivalently be seen as a linear map $M_\Phi:\mathbb C^N\otimes\mathbb C^N\to\mathbb C^N\otimes\mathbb C^N$, i.e.~an element of $\mathcal M_{N^2}(\mathbb C)$. Concretely, a CP linear map
\[ \Phi:X\in\mathcal M_N(\mathbb C) \mapsto \sum_{s=1}^d K_sXK_s^*\in\mathcal M_N(\mathbb C) \]
can be identified with
\[ M_\Phi = \sum_{s=1}^d K_s\otimes\overbar K_s \in \mathcal M_{N^2}(\mathbb C), \]
where $\overbar K_s$ denotes the entry-wise conjugate of $K_s$. This identification preserves the eigenvalues and singular values, i.e.~$\lambda_i(\Phi)=\lambda_i(M_\Phi)$ and $s_i(\Phi)=s_i(M_\Phi)$ for each $1\leq i\leq N^2$, and hence the Schatten norms as well.

Given $\Phi$ a unital quantum channel on $\mathcal M_N(\mathbb C)$, let us introduce a few extra notations related to its spectral properties. We first define
\begin{equation} \label{eq:def-lambda} r(\Phi) := \max\left\{ 1\leq i\leq N^2 : |\lambda_i(\Phi)|=1 \right\} \quad \text{and} \quad \lambda(\Phi):=|\lambda_{r(\Phi)+1}(\Phi)|, \end{equation}
so that $\lambda_1(\Phi),\ldots,\lambda_{r(\Phi)}(\Phi)$ are the so-called peripheral eigenvalues of $\Phi$ (i.e.~those with modulus $1$) and $\lambda(\Phi)<1$ is the largest modulus of a non-peripheral eigenvalue of $\Phi$. It is known that peripheral eigenvalues have trivial Jordan blocks (see e.g.~\cite[Proposition 6.2]{Wolf12}). This means that $\Phi$ can always be written as 
\[ \Phi = \sum_{i=1}^{r(\Phi)} \lambda_i(\Phi)\Pi^{(i)}_\Phi + \Delta_\Phi, \]
where the $\Pi^{(i)}_\Phi$'s are (not necessarily orthogonal) rank $1$ projectors satisfying $\Pi^{(i)}_\Phi\Pi^{(j)}_\Phi=\delta_{i,j}\Pi^{(i)}_\Phi$ and $\Delta_\Phi\Pi^{(i)}_\Phi=\Pi^{(i)}_\Phi\Delta_\Phi=0$. We can similarly define
\begin{equation} \label{eq:def-s} r'(\Phi) := \max\left\{ 1\leq i\leq N^2 : s_i(\Phi)=1 \right\} \quad \text{and} \quad s(\Phi):=s_{r'(\Phi)+1}(\Phi), \end{equation}
so that $s(\Phi)<1$ is the largest non-$1$ singular value of $\Phi$.
It is easy to see that $r'(\Phi)\geq r(\Phi)$, and consequently that $s(\Phi)\leq s_{r(\Phi)+1}(\Phi)$. Indeed, by Weyl's majorant theorem (see e.g.~\cite[Theorem~II.3.6]{Bhatia97}), we have $|\lambda_1(\Phi)|+\cdots+|\lambda_{r(\Phi)}(\Phi)|\leq s_1(\Phi)+\cdots+s_{r(\Phi)}(\Phi)$, and by definition the left-hand side is equal to $r(\Phi)$ while the right-hand side is at most $r(\Phi)$, with equality if and only if $s_{r(\Phi)}(\Phi)=1$ i.e.~if and only if $r'(\Phi)\geq r(\Phi)$.

Setting
\begin{equation} \label{eq:def-Pi} \Pi_\Phi := \sum_{i=1}^{r(\Phi)} \lambda_i(\Phi)\Pi^{(i)}_\Phi, \end{equation}
it can be shown that $\Pi_\Phi$ is actually a unital quantum channel on $\mathcal M_N(\C)$ (see e.g.~\cite[Proposition 6.3]{Wolf12}). We then have by definition $\lambda(\Phi)=|\lambda_{r(\Phi)+1}(\Phi)|=|\lambda_1(\Phi-\Pi_\Phi)|$ and by the min-max principle for singular values (see e.g.~\cite[Corollary~III.1.2 and Problem~III.6.1]{Bhatia97}) $s(\Phi)\leq s_{r(\Phi)+1}(\Phi)\leq s_1(\Phi-\Pi_\Phi)$.

With these observations at hand, we are now ready to give a first general definition of what it means to be an expander for a unital quantum channel. We will later give an apparently slightly different one, suited to the more specific case that we will actually focus on (cf.~Definition \ref{def:expander}). But as we will see, both actually coincide.

\begin{definition}[Expander] \label{def:expander-general}
	Fix $0<\lambda<1$. A unital quantum channel $\Phi$ on $\mathcal M_N(\mathbb C)$ with Kraus rank $d$ is called a $(d,\lambda)$ expander if it satifies one of the following properties: (1) $\lambda(\Phi) \leq \lambda$ or (2) $s(\Phi) \leq \lambda$, where the parameters $\lambda(\Phi)$ and $s(\Phi)$ are as defined in equations \eqref{eq:def-lambda} and \eqref{eq:def-s} respectively.
\end{definition}

Let us explain the meaning of both conditions. Given a unital quantum channel $\Phi$, the parameters $\lambda(\Phi)$ and $s(\Phi)$ can be seen as quantifying how far $\Phi$ is from the quantum channel $\Pi_\Phi$, defined in equation \eqref{eq:def-Pi}, that acts as $\Phi$ on the peripheral eigenspace of $\Phi$ and cancels its other eigenspaces. Indeed, we have already explained that $\lambda(\Phi)=|\lambda_1(\Phi-\Pi_\Phi)|$ and $s(\Phi)\leq s_1(\Phi-\Pi_\Phi)$, with the latter inequality in fact being an equality for the unital quantum channels that we will look at later on. If this is so, we see that, the smaller $\lambda(\Phi)$ or $s(\Phi)$, the closer $\Phi$ to $\Pi_\Phi$, in the distance measure given by the largest eigenvalue or singular value of their difference. Now, if the `ideal' channel $\Pi_\Phi$ has a large Kraus rank, the goal is to find an approximate channel $\Phi$ that is as `economical' as possible, i.e.~that has an as small as possible Kraus rank $d$. For instance, in the simplest case where $\lambda(\Phi)=|\lambda_2(\Phi)|$ and $s(\Phi)=s_2(\Phi)$, $\Pi_\Phi$ is just the channel that sends any input state on the maximally mixed state (the so-called fully randomizing or depolarizing channel), which has maximal Kraus rank $N^2$. So any approximation of $\Pi_\Phi$ with Kraus rank $d\ll N^2$ could be potentially useful in practice. More precisely, the idea is often to understand, given a class of quantum channels of interest, what is the best possible scaling between the parameters $d$ and $\lambda$, in order to try and exhibit examples of channels achieving it.

If the unital quantum channel $\Phi$ is self-adjoint, in the sense that $\Phi^*=\Phi$, then conditions (1) and (2) in Definition \ref{def:expander-general} above are equivalent, because $\lambda(\Phi)=s(\Phi)$. While if it is such that $r'(\Phi)=r(\Phi)$, then condition (2) is stronger than condition (1), because by Weyl's majorant theorem (see e.g.~\cite[Theorem~II.3.6]{Bhatia97}) $|\lambda_1(\Phi)|+\cdots+|\lambda_{r(\Phi)+1}(\Phi)| \leq s_1(\Phi)+\cdots+s_{r(\Phi)+1}(\Phi)$, so the fact that $|\lambda_i(\Phi)|=s_i(\Phi)=1$ for all $1\leq i\leq r(\Phi)$ implies that $\lambda(\Phi)=|\lambda_{r(\Phi)+1}(\Phi)| \leq s_{r(\Phi)+1}(\Phi)=s(\Phi)$. As we will see, the latter property will be satisfied by the unital quantum channels that we will consider in the sequel. In what follows, we will thus use the stronger condition (2) as our definition of expansion. We however mention that, depending on the context, either one or the other way of defining expansion might be more relevant. For instance, $\lambda(\Phi)$ quantifies the speed of convergence of the dynamics $(\Phi^q(\rho))_{q\in\mathbb N}$ to its equilibrium and the speed of decay of correlations in the 1D many-body quantum state that has $\Phi$ as so-called transfer operator (see e.g.~\cite[Section 4]{Lancien22}). On the other hand, $s(\Phi)$ generally appears when quantifying the distance of $\Phi$ to $\Pi_\Phi$ in certain norm distances: it is always upper bounded and often equal to the $2{\to}2$ norm of $\Phi-\Pi_\Phi$ (and can be related to several other $p{\to}q$ norms of $\Phi-\Pi_\Phi$).

\subsection{Unitary designs and tensor product expanders} \hfill\vspace{0.1cm}

In what follows, we denote by $\mu_H$ the Haar measure on the unitary group $U(n)$. Given $k\in\N$, we define
\begin{equation} \label{def:P^k}
	P^{(k)} := \E_{U\sim\mu_H}\left[\left(U^{\otimes k}\otimes\overbar{U}^{\otimes k}\right)\right] \in \mathcal M_{n^{2k}}(\C).
\end{equation}
Identifying linear maps on $\C^{n^k}\otimes\C^{n^k}$, i.e.~elements of $\mathcal M_{n^{2k}}(\C)$, with linear maps on $\mathcal M_{n^k}(\C)$, we see that $P^{(k)}$ is nothing else than the matrix version of the unital quantum channel
\begin{equation} \label{def:Pi^k}
	\Pi^{(k)}: Y\in\mathcal M_{n^k}(\mathbb C) \mapsto \E_{U\sim\mu_H}\left[\left(U^{\otimes k}YU^{*\otimes k}\right)\right] \in\mathcal M_{n^k}(\mathbb C) . 
\end{equation}

Let us first illustrate these definitions in the simplest cases where $k=1,2$. $P^{(1)}$ and $P^{(2)}$ can be easily expressed explicitly, as
\begin{align*} 
	P^{(1)} & = \Psi_{12}, \\ 
	P^{(2)} & = \frac{n}{2(n+1)}\left(\ket{\psi_{13}\otimes\psi_{24}} +\ket{\psi_{14}\otimes\psi_{23}}\right)\left(\bra{\psi_{13}\otimes\psi_{24}} +\bra{\psi_{14}\otimes\psi_{23}}\right) \\
	& \qquad + \frac{n}{2(n-1)}\left(\ket{\psi_{13}\otimes\psi_{24}} -\ket{\psi_{14}\otimes\psi_{23}}\right)\left(\bra{\psi_{13}\otimes\psi_{24}} -\bra{\psi_{14}\otimes\psi_{23}}\right) \\
	& = \left(1+\frac{1}{n^2-1}\right)\left(\Psi_{13}\otimes\Psi_{24}+\Psi_{14}\otimes\Psi_{23}\right) - \frac{n}{n^2-1}\left(\ketbra{\psi_{13}\otimes\psi_{24}}{\psi_{14}\otimes\psi_{23}} + \ketbra{\psi_{14}\otimes\psi_{23}}{\psi_{13}\otimes\psi_{24}} \right),
\end{align*}
where $\Psi=\ketbra{\psi}{\psi}=\sum_{i,j=1}^n\ketbra{ii}{jj}/n$ denotes the maximally entangled state on $\C^n\otimes\C^n$ (and subscripts indicate which copies of $\C^n$ each operator is acting on).
This translates into the following explicit expressions for $\Pi^{(1)}$ and $\Pi^{(2)}$
\begin{align*} 
	\forall\ Y\in\mathcal M_n(\C),\ \Pi^{(1)}(Y) & = \tr(Y)\frac{I}{n}, \\ 
	\forall\ Y\in\mathcal M_{n^2}(\C),\ \Pi^{(2)}(Y) & = \tr\left(Y\frac{I+F}{2}\right)\frac{I+F}{n(n+1)} + \tr\left(Y\frac{I-F}{2}\right)\frac{I-F}{n(n-1)}, 
\end{align*}
where $F=\sum_{i,j=1}^n\ketbra{ij}{ji}$ denotes the flip operator on $\C^n\otimes\C^n$.

More generally, we can write down $P^{(k)}$, for any $k\in\mathbb N$, thanks to Weingarten calculus (see e.g.~\cite[Section 2]{Collins06} for more details). In what follows, we denote by $\mathcal S_k$ the set of permutations of $k$ elements. And given $\pi\in\mathcal S_k$, we denote by $|\pi|$ the minimal number of transpositions needed to write $\pi$ as a product of transpositions, by $C(\pi)$ the set of cycles appearing in the decomposition of $\pi$ into disjoint cycles and by $\sharp(c)$ the length of $c\in C(\pi)$. An explicit expression for $P^{(k)}$ then reads
\begin{equation} \label{eq:P^k-Weingarten}
	P^{(k)} = \sum_{\pi,\sigma\in\mathcal S_k} \text{Wg}(n,\pi\sigma^{-1}) \left( \sum_{1\leq i_1,j_1,\ldots,i_k,j_k\leq n} \ketbra{i_1\cdots i_k i_{\pi(1)}\cdots i_{\pi(k)}}{j_1\cdots j_k j_{\sigma(1)}\cdots j_{\sigma(k)}} \right),
\end{equation}
where the Weingarten function $\text{Wg}(n,\pi)$ satisfies
\begin{equation} \label{eq:Weingarten-Moebius} \text{Wg}(n,\pi)=\frac{1}{n^{k+|\pi|}}\left(\text{Mb}(\pi)+O\left(\frac{1}{n^2}\right)\right), 
\end{equation}
and the Moebius function $\text{Mb}(\pi)$ is defined as
\[ \text{Mb}(\pi)=\prod_{c\in C(\pi)} (-1)^{\sharp(c)-1}\mathrm{Cat}_{\sharp(c)-1}, \]
for $\mathrm{Cat}_\ell=\binom{2\ell}{\ell}/(\ell+1)$ the $\ell$-th Catalan number.

$P^{(k)}$ has the crucial property of being the orthogonal projector onto the subspace 
\[ E^{(k)} := \mathrm{span}\left\{ \ket{u_\pi} := \sum_{1\leq i_1,\ldots,i_k\leq n}\ket{i_1\cdots i_k i_{\pi(1)}\cdots i_{\pi(k)}}  : \pi\in\mathcal S_k \right\} \subset \C^{n^{2k}}, \]
whose dimension we denote by $r(n,k)$. $P^{(k)}$ thus has only $1$ and $0$ as eigenvalues and singular values, with multiplicities $r(n,k)$ and $n^{2k}-r(n,k)$ respectively. Note that, if $k\leq n$, then the vectors $u_\pi$, $\pi\in\mathcal S_k$, are linearly independent, and we thus simply have $r(n,k)=k!$. In what follows, this will always be satisfied, since we will consider the regime where $k$ is fixed and $n$ grows.

A last straightforward but key property of $P^{(k)}$ that we should mention is that it is invariant under left and right multiplication by a unitary of the form $V^{\otimes k}\otimes\overbar{V}^{\otimes k}$, i.e.
\[ \forall\ V\in U(n),\ V^{\otimes k}\otimes\overbar{V}^{\otimes k}P^{(k)} = P^{(k)}V^{\otimes k}\otimes\overbar{V}^{\otimes k} = P^{(k)}. \]
This is a consequence of the fact that the Haar measure $\mu_H$ on $U(n)$ is itself (by definition) left and right invariant.

We are now ready to get to the two main definitions of this work, following the terminology adopted in \cite{Harrow09}. For that, we need to introduce one final notation. Given $k\in\N$, we define for any probability measure $\mu$ on the unitary group $U(n)$
\[ P^{(k)}_\mu:= \E_{U\sim\mu}\left[\left(U^{\otimes k}\otimes\overbar{U}^{\otimes k}\right)\right] \in\mathcal M_{n^{2k}}(\C), \]
which is the matrix version of the unital quantum channel
\[ \Pi^{(k)}_\mu: Y\in\mathcal M_{n^k}(\mathbb C) \mapsto \E_{U\sim\mu}\left[\left(U^{\otimes k}YU^{*\otimes k}\right)\right] \in\mathcal M_{n^k}(\mathbb C). \]
Those generalize the definitions of $P^{(k)}$ and $\Pi^{(k)}$ given in equations \eqref{def:P^k} and \eqref{def:Pi^k}, corresponding to the special case where $\mu=\mu_H$.

\begin{definition}[Unitary design] \label{def:design}
	Given $k\in\mathbb N$, a probability measure $\mu$ on the unitary group $U(n)$ is called a (unitary) $k$-design if 
	\[ P^{(k)}_\mu = P^{(k)} \quad \text{i.e.~equivalently} \quad \Pi^{(k)}_\mu =  \Pi^{(k)}. \]
\end{definition}

Note that, if $\mu$ is a $k$-design for some $k\in\N$, then it is automatically a $k'$-design for all $k'\leq k$. So satisfying the design property up to some order $k$ should really be seen as being an approximation of the Haar measure, with precision increasing with $k$.

\begin{definition}[Tensor product expander] \label{def:expander}
	Fix $0<\lambda<1$. Given $k\in\mathbb N$, a probability measure $\mu$ on the unitary group $U(n)$ whose support has finite cardinality $d=|\mathrm{supp}(\mu)|$ is called a $(d,\lambda)$ $k$-copy tensor product expander if 
	\[ \left\| P^{(k)}_\mu - P^{(k)} \right\|_\infty \leq \lambda \quad \text{i.e.~equivalently} \quad \left\| \Pi^{(k)}_\mu - \Pi^{(k)} \right\|_\infty \leq \lambda. \]
\end{definition}

Note that, for any probability measure $\mu$ on the unitary group $U(n)$, the subspace $E^{(k)}$ is (left and right) invariant under the action of the matrix $P^{(k)}_\mu$. So the largest eigenvalue and singular value $1$ of $P^{(k)}_\mu$ have multiplicity at least $r(n,k)$, with corresponding Jordan block the orthogonal projection $P^{(k)}$ onto $E^{(k)}$. What is more, if $\mu$ is a $(d,\lambda)$ $k$-copy tensor product expander for some $0<\lambda<1$, then it implies that $P^{(k)}_\mu$ has no other eigenvalue of modulus equal to $1$ or singular value equal to $1$ than those of $P^{(k)}$. Hence in this case, using the notations introduced in equations \eqref{eq:def-lambda} and \eqref{eq:def-s}, we have 
\[ r\left(\Pi^{(k)}_\mu\right)=r'\left(\Pi^{(k)}_\mu\right)=r(n,k) \quad \text{and} \quad \lambda\left(\Pi^{(k)}_\mu\right)=\left|\lambda_1\left(\Pi^{(k)}_\mu - \Pi^{(k)}\right)\right|,\ s\left(\Pi^{(k)}_\mu\right)=s_1\left(\Pi^{(k)}_\mu - \Pi^{(k)}\right) . \]
We thus see that the characterization of an expander given in Definition \ref{def:expander}, which applies to the specific case of unital quantum channels whose Kraus operators are (weighted) tensor products of unitaries, actually coincides with the general one given in Definition \ref{def:expander-general}. We have just made the slight abuse of saying indifferently that a probability measure $\mu$ on the unitary group or its corresponding unital quantum channel $\Pi_\mu^{(k)}$ is an expander.

\subsection{Presentation of the problem and main results} \hfill\vspace{0.1cm}

The general question that we address in this work is whether optimal (tensor product) expanders can be constructed by sampling Kraus operators from unitary designs. Let us first explain what we mean by optimal in this context. For that, we need the following lemma, which gives us a lower bound on how expanding a quantum channel that is a mixture of unitary conjugations can be, as the underlying dimension increases. 

\begin{lemma} \label{lem:s_2-lb}
	Let $d=d_N,r=r_N\in\mathbb N$ be such that $d=o(N^2/r)$ as $N\to\infty$. Given $V_1,\ldots,V_d\in U(N)$, define the quantum channel 
	\[ \Phi:X\in\mathcal M_N(\mathbb C) \mapsto \frac{1}{d}\sum_{s=1}^dV_sXV_s^*\in\mathcal M_N(\mathbb C), \]
	and suppose that $r'(\Phi)=r$, where $r'(\Phi)$ is as defined in equation \eqref{eq:def-s}. Then, as $N\to\infty$, we have
	\begin{equation} \label{eq:s_2-lb}
		s(\Phi) \geq \frac{2\sqrt{d-1}}{d}\left(1-o(1)\right),
	\end{equation}
	where $s(\Phi)$ is also as defined in equation \eqref{eq:def-s}.
\end{lemma}

This result essentially appears as \cite[Lemma 1.5]{Timhadjelt24}. We simply extend it here to the case where the largest singular value might be degenerate. This generalization is straightforward so we repeat only the main steps of the proof, for the sake of completeness (and we refer the reader to it for further details).

\begin{proof}
	Let $p\in\mathbb N$, to be determined later, and set
	\[ m^{(p)}(\Phi) := \sum_{i=1}^{N^2} s_i(\Phi)^{2p} = \frac{1}{d^{2p}}\sum_{s_1,\ldots,s_{2p}=1}^d \left|\tr\left(V_{s_1}V^*_{s_2}\cdots V_{s_{2p-1}}V^*_{s_{2p}}\right)\right|^2. \]
	The last equality is because, defining $M_\Phi=\sum_{s=1}^d V_s\otimes\overbar{V}_s/d$ as the matrix version of $\Phi$, we have 
	\begin{align*} 
		\sum_{i=1}^{N^2} s_i(\Phi)^{2p} & = \tr\left(\left(M_\Phi M_\Phi^*\right)^p\right) \\
		& = \tr\left( \frac{1}{d^{2p}}\sum_{s_1,\ldots,s_{2p}=1}^d \left(V_{s_1}V^*_{s_2}\cdots V_{s_{2p-1}}V^*_{s_{2p}}\right)\otimes\left(\overbar{V}_{s_1}\overbar{V}^*_{s_2}\cdots \overbar{V}_{s_{2p-1}}\overbar{V}^*_{s_{2p}}\right) \right) \\
		& = \frac{1}{d^{2p}}\sum_{s_1,\ldots,s_{2p}=1}^d \tr\left(V_{s_1}V^*_{s_2}\cdots V_{s_{2p-1}}V^*_{s_{2p}}\right) \times \tr\left(\overbar{V}_{s_1}\overbar{V}^*_{s_2}\cdots \overbar{V}_{s_{2p-1}}\overbar{V}^*_{s_{2p}}\right) .
	\end{align*}
	Now on the one hand, we have by definition
	\[ m^{(p)}(\Phi) \leq r + (N^2-r)\times s(\Phi) \leq r + N^2\times s(\Phi). \]
	And on the other hand, setting
	\[ \eta(p,d) := \left|\left\{ (s_1,\ldots,s_{2p})\in\{1,\ldots,d\}^{2p} : V_{s_1}V^*_{s_2}\cdots V_{s_{2p-1}}V^*_{s_{2p}}=I \right\}\right|, \]
	we clearly have
	\[ m^{(p)}(\Phi) \geq N^2\times\frac{\eta(p,d)}{d^{2p}}. \]
	Since $\eta(p,d)\geq c(2\sqrt{d-1})^{2p}/p^{3/2}$ for some absolute constant $c>0$ (see e.g.~\cite[Proof of Theorem 5.3]{HLW06}), we deduce from the above two inequalities that
	\[ s(\Phi) \geq \frac{2\sqrt{d-1}}{d}\left(\frac{c}{p^{3/2}}-\frac{r}{N^2}\left(\frac{d}{2\sqrt{d-1}}\right)^{2p}\right)^{1/2p}. \]
	Now, choosing $p=p_N$ such that, as $N\to\infty$, 
	\[ p= o\left(\frac{\log\left(N^2/r\right)}{2\log\left(d/2\sqrt{d-1}\right)}\right) \quad \text{and} \quad p\to\infty, \]
	which is indeed possible with the assumption that $d=o(N^2/r)$,
	we have that, as $N\to\infty$,
	\[ \left(\frac{c}{p^{3/2}}-\frac{r}{N^2}\left(\frac{d}{2\sqrt{d-1}}\right)^{2p}\right)^{1/2p} \to 1. \]
	And this establishes the announced result.
\end{proof}

In what follows, we will apply Lemma \ref{lem:s_2-lb} to the situation where $r=k!$ is fixed while both $N=n^k$ and $d$ are large (under the constraint that $d$ is small compared to $N^2=n^{2k}$). In this regime, we will say that a quantum channel $\Phi$ on $\mathcal M_N(\C)$ that is a mixture of unitary conjugations is an optimal expander if it (asymptotically) achieves the lower bound \eqref{eq:s_2-lb}, i.e.~if
\begin{equation} \label{def:optimal-expander}
	s(\Phi) \leq  \frac{2}{\sqrt{d}}(1+\delta_{N,d}) , 
\end{equation}
with $\delta_{N,d}\to 0$ as $N,d\to\infty$ (with $d\ll N^2$).

We first investigate the simplest setup, where Kraus operators are of the form $K_s=U_s/\sqrt{d}$ with $U_s\in U(n)$, $1\leq s\leq d$. It was originally shown by Hastings in \cite{Hastings07} that, sampling the $U_s$'s as $d/2$ independent Haar distributed unitaries, together with their adjoints, provided with high probability an optimal expander, in the sense of equation \eqref{def:optimal-expander}. Pisier then proved in \cite{Pisier14} that the same is true for the $U_s$'s being $d$ independent Haar distributed unitaries, up to a constant multiplicative factor (i.e.~replacing $2/\sqrt{d}$ by $C/\sqrt{d}$ for some absolute constant $2\leq C<\infty$ in equation \eqref{def:optimal-expander}). This was very recently improved by Timhadjelt in \cite{Timhadjelt24}, who established this result with the optimal constant $2$. It is also worth pointing out that the latter work is the first one that addresses the case where $d$ is growing with $n$: all previous ones only covered the regime of fixed $d$ and growing $n$.

The main problem with all these random constructions is that sampling large Haar distributed unitaries requires a large amount of randomness, and is thus costly to implement in practice. One can therefore naturally wonder whether random examples of optimal expanders can be obtained by sampling Kraus operators on the unitary group from a simpler measure than the Haar measure, for instance one that is finite rather than continuous. Unitary designs are natural candidates for this purpose, since on the one hand they, by definition, resemble the Haar measure (up to moments of some order) and on the other they can be efficiently generated. Indeed, there are explicit finite subsets of unitaries that are known to form exact designs, and it was proven through a long line of works that random quantum circuits of short depth are with high probability approximate designs (see Section \ref{sec:summary} for a more in-depth discussion on this point).

In this work, we prove that, in the regime $(\log n)^4\ll d\ll n^2$, a random unital quantum channel on $\mathcal M_n(\C)$ whose Kraus operators are sampled as $d$ independent unitaries from a $2$-design is with high probability an optimal expander, in the sense of equation \eqref{def:optimal-expander}. This result appears as Corollary \ref{th:main-implication}. This is particularly interesting in practice since there are explicit examples of finite subsets of unitaries that are known to form a $2$-design. Some of them, such as e.g.~the Clifford group, are even efficiently implementable.

We then investigate the tensor power setup, where Kraus operators are of the form $K_s=U_s^{\otimes k}/\sqrt{d}$ with $U_s\in U(n)$, $1\leq s\leq d$. It was shown by Harrow and Hastings in \cite{Hastings09} that, for $d$ fixed, sampling the $U_s$'s as $d/2$ independent Haar distributed unitaries, together with their adjoints, provided with high probability a close to optimal expander, in the sense that equation \eqref{def:optimal-expander} holds with $2/\sqrt{d}$ replaced by $C/\sqrt{d}$ for some absolute constant $2\leq C<\infty$. 

Very recently, Fukuda proved in \cite{Fukuda24} that, for $d\geq C'k\log n$, sampling the $U_s$'s as $d$ independent $k$-design distributed unitaries provided with high probability an expander such that, in equation \eqref{def:optimal-expander}, $2/\sqrt{d}$ is replaced by $C\sqrt{k\log n}/\sqrt{d}$. The latter work was carried out independently of this one, and uses different techniques (namely a matrix version of Bernstein's inequality). It also contains several corollaries on approximating $\Pi^{(k)}$ in various $1{\to}p$ norms (i.e.~minimizing the Shatten $p$-norm of the output over all Schatten $1$-norm inputs). While here we only focus on approximating $\Pi^{(k)}$ in $2{\to}2$ norm, which is the relevant one in the context of expansion.

We prove in what follows that, again in the regime $(\log n)^4\ll d\ll n^{2k}$, a random unital quantum channel on $\mathcal M_{n^k}(\C)$ whose Kraus operators are sampled as $d$ independent $k$-copy tensor power unitaries from a $2k$-design is with high probability an optimal $k$-copy tensor product expander, in the sense of equation \eqref{def:optimal-expander}. This result appears as Corollary \ref{th:main-implication-k}. Exactly as for the case $k=1$, this is interesting in practice since there are explicit constructions of finite $2k$-designs \cite{Bannai22}. Note that, compared to the result of Fukuda mentioned above, our result has the advantage of establishing optimal expansion (without an extra $\sqrt{k\log n}$ factor), but at the cost of doubling the order of the design that Kraus operators are sampled from ($2k$ vs $k$) and of restricting slightly more the range of admissible $d$ (to $d\geq\mathrm{poly}(\log n)$ rather than $d\geq k\log n$).

Finally, let us say a few words about a setup that is slightly different from the one we consider here, namely the one where $d$ is fixed and $n$ is infinite. Such asymptotic regime can be tackled thanks to powerful tools from free probability. It was for instance shown by Bordenave and Collins \cite{BC24} that, if $U_1,\ldots,U_d$ are  independent Haar distributed on $U(n)$ or $O(n)$, then the quantum channel $\Phi$ that has the $U_s^{\otimes k}/\sqrt{d}$ as Kraus operators is such that $s(\Phi)$ converges almost surely to $2\sqrt{d-1}/d$ as $n\to\infty$. This is a consequence of the fact that the $U_s^{\otimes k}\otimes\overbar{U}_s^{\otimes k}$ are asymptotically strongly free on the subspace orthogonal to their fixed subspace, which allows to identify the asymptotic almost sure operator norm of $\sum_{s=1}^dU_s^{\otimes k}\otimes\overbar{U}_s^{\otimes k}$ when restricted to this subspace.  

\subsection{Proof strategy and main technical tool} \hfill\vspace{0.1cm}

The model of random unital quantum channel that we consider here is of the form
\[ \Phi:Y\in\mathcal M_{n^k}(\C)\mapsto\frac{1}{d}\sum_{s=1}^d U_s^{\otimes k}YU_s^{*\otimes k}\in\mathcal M_{n^k}(\C), \]
where $U_1,\ldots,U_d\in U(n)$ are independently sampled from a $2k$-design. The goal is to prove that there exist $\delta_n,\epsilon_n\to_{n\to\infty}0$ such that
\[ \P\left( \left\|\Phi-\Pi^{(k)}\right\|_\infty \leq \frac{2}{\sqrt{d}}\left(1+\delta_n\right) \right) \geq 1-\epsilon_n, \]
hence establishing that $\Phi$ is typically an optimal $k$-copy tensor product expander. As already explained, this equivalent to proving that 
\[ \P\left( \left\|M_\Phi-P^{(k)}\right\|_\infty \leq \frac{2}{\sqrt{d}}\left(1+\delta_n\right) \right) \geq 1-\epsilon_n, \]
where $M_\Phi\in\mathcal M_{n^{2k}}(\C)$ is the matrix version of $\Phi$, i.e.
\[ M_\Phi = \frac{1}{d}\sum_{s=1}^d U_s^{\otimes k}\otimes\overbar{U}_s^{\otimes k}. \]

A straightforward but key observation is that $\Pi^{(k)}=\E(\Phi)$, or equivalently $P^{(k)}=\E(M_\Phi)$. This is because the $U_s$'s being drawn from a $2k$-design, they are in particular drawn from a $k$-design. So what we ultimately have to upper bound is
\[ \left\|\Phi-\E(\Phi)\right\|_\infty = \left\|M_\Phi-\E(M_\Phi)\right\|_\infty. \]
The final results that we obtain appear as Theorem \ref{th:main} for the particular case $k=1$ and as Theorem \ref{th:main-k} for the general case.

In order to do this, we will follow the proof strategy adopted in \cite{Lancien23}, which studied a similar question for random models of non-unital quantum channels. Concretely, as it was done in the latter work, we will make use of recent advances in the study of the operator norm of random matrices with dependence and non-homogeneity, which culminated with the works \cite{vanHandel21,vanHandel22}. More precisely, we will rely on the following result which appears as \cite[Corollary 2.17]{vanHandel22}. We state here only the version that will be useful to us, namely for (non-self-adjoint complex) matrices that are almost surely bounded, but there are numerous variations suited to slightly different settings.

\begin{theorem}[Operator norm of random matrices with dependence and non-homogeneity] \label{th:vanHandel}
	Let $Z_1,\ldots,Z_d\in\mathcal M_N(\C)$ be independent centered and almost surely bounded random matrices, and set $X=\sum_{s=1}^d Z_s$. Let $\Cov(X)$ denote the covariance matrix associated to $X$, i.e.~the $N^2\times N^2$ matrix such that $\Cov(X)_{ijkl}= \E(X_{ij}\overbar{X}_{kl})$ for every $1\leq i,j,k,l\leq N$. Define the following three parameters of the random matrix $X$:
	\begin{align*}
		& \sigma(X):= \max\left(\|\E(XX^*)\|_\infty^{1/2},\|\E(X^*X)\|_\infty^{1/2}\right), \\
		& \upsilon(X):= \|\Cov(X)\|_\infty^{1/2}, \\
		& R(X) := \max_{1\leq s\leq d}\left\|Z_s\right\|_\infty.
	\end{align*}
	We then have 
	\begin{align*} 
		\E \|X\|_\infty & \leq \|\E(XX^*)\|_\infty^{1/2}+\|\E(X^*X)\|_\infty^{1/2} \\
		& \qquad + C\left( (\log N)^{3/4} \sigma(X)^{1/2}\upsilon(X)^{1/2} + (\log N)^{2/3} \sigma(X)^{2/3}R(X)^{1/3} + (\log N)R(X) \right), 
	\end{align*}
	where $C<\infty$ is an absolute constant. What is more, for all $t>0$, with probability at least $1-4Ne^{-t}$,
	\begin{align*} 
		\|X\|_\infty & \leq \|\E(XX^*)\|_\infty^{1/2}+\|\E(X^*X)\|_\infty^{1/2} \\
		& \qquad + C'\left( (\log N)^{3/4} \sigma(X)^{1/2}\upsilon(X)^{1/2} + \sigma(X)^{2/3}R(X)^{1/3}t^{2/3} + R(X)t + \upsilon(X)t^{1/2} \right),
	\end{align*}
	where $C'<\infty$ is an absolute constant.
\end{theorem}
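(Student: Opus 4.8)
The plan is to prove this via the moment method, combined with a Gaussian universality (comparison) argument and a free-probability analysis of the resulting Gaussian model; this is the strategy underlying the Bandeira--Boedihardjo--van Handel line of work that culminates in \cite{vanHandel22}. Since $\|X\|_\infty \leq (\E\,\tr|X|^{2p})^{1/2p}$ for every $p\geq 1$, and since choosing $p\simeq\log N$ makes the $S_{2p}$-to-$S_\infty$ comparison lose only a multiplicative constant (because $X$ has at most $N$ singular values, so $N^{1/2p}=O(1)$), the whole problem reduces to estimating the single scalar $\E\,\tr(XX^*)^p$ for $p\simeq\log N$. Expanding this trace as a sum over closed walks and using independence and centering of the $Z_s$, the dominant contributions come from pairings of the $2p$ factors, and the planar (non-crossing) part of this expansion is exactly the free-probability prediction.

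The first main step is a universality reduction: replace the independent centered matrices $Z_s$ by Gaussian matrices $G_s$ with matching first and second moments, hence the same $\Cov$, $\E(XX^*)$ and $\E(X^*X)$. I would implement this by a Lindeberg-type swapping, replacing one summand at a time and Taylor-expanding $\tr(\cdot)^p$; each swap cancels up to second order by the moment matching, and the remainder involves the third and higher moments of $Z_s$, which are precisely what $R_p(X)$ controls. Optimizing the resulting error against $p\simeq\log N$ produces the terms $(\log N)^{2/3}\sigma(X)^{2/3}R_p(X)^{1/3}$ and $(\log N)R_p(X)$, and reduces everything to the Gaussian model $G=\sum_s G_s$.

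The second main step compares $\E\,\tr(GG^*)^p$ with its operator-valued semicircular analogue. The free model has operator norm at most $\|\E(GG^*)\|_\infty^{1/2}+\|\E(G^*G)\|_\infty^{1/2}$ (the edge of the support of the operator-valued semicircular element with the given covariance, via Lehner's formula), which is why these two terms appear as the leading order of the bound. The deviation between the true Gaussian moments and the free moments is governed by the non-planar (positive-genus) terms of the topological expansion, and the quantitative input is that each crossing is damped by a factor controlled by $\upsilon(X)=\|\Cov(X)\|_\infty^{1/2}$; summing these subleading contributions against the choice $p\simeq\log N$ yields the correction $(\log N)^{3/4}\sigma(X)^{1/2}\upsilon(X)^{1/2}$. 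For the tail bound, I would note that $\|G\|_\infty$ is a $\upsilon(X)$-Lipschitz function of the underlying i.i.d.\ standard Gaussians, so Gaussian concentration of measure gives the term $\upsilon(X)t^{1/2}$, while the genuinely non-Gaussian large deviations of the individual $Z_s$ are absorbed into $\sigma(X)^{2/3}R_p(X)^{1/3}t^{2/3}$ and $R_p(X)t$ via the same $R_p$ estimates, together with a union bound over the $N$ singular values (the source of the probability $1-Ne^{-t}$).

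I expect the main obstacle to be the second step, the sharp free-probability comparison for the Gaussian model. Producing the correct exponent $\sigma(X)^{1/2}\upsilon(X)^{1/2}$ on the correction term, rather than a cruder $\upsilon(X)$ or $\sigma(X)$, requires a delicate accounting of the positive-genus terms in the moment expansion and a precise understanding of how the covariance operator $\Cov(X)$ enters each crossing. This is exactly the technical heart of \cite{vanHandel22}, and reproducing its sharp constants would be the most demanding part of any self-contained argument; for our purposes we will simply invoke it as a black box.
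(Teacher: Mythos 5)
This theorem is not proved in the paper at all: it is quoted verbatim as an external input, namely Corollary 2.17 of \cite{vanHandel22}, and used as a black box. Your proposal ends in exactly the same place --- you explicitly say you would invoke \cite{vanHandel22} for the technical heart --- so what you have written is an annotated citation rather than a proof, which puts it on the same footing as the paper itself; there is nothing to compare it against and no gap relative to what the paper does. For what it is worth, your sketch of the underlying strategy is broadly faithful to the Bandeira--Boedihardjo--van Handel and Brailovskaya--van Handel line of work (Gaussian universality controlled by $R_p$, comparison of the Gaussian model to an operator-valued semicircular element whose norm equals $\|\E(XX^*)\|_\infty^{1/2}+\|\E(X^*X)\|_\infty^{1/2}$ via Lehner's formula, Gaussian concentration for the tail), but a couple of details are off: the universality step in \cite{vanHandel22} is carried out by resolvent interpolation with cumulant expansions rather than Lindeberg swapping of trace moments, and the concentration term is governed by the weak variance parameter $\sigma_*(X)$, which is merely bounded above by $\upsilon(X)$, rather than by $\upsilon(X)$ itself being the Lipschitz constant. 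If you intend to rely on the result as a black box, the honest thing to do is exactly what the paper does: state it with the citation and omit the sketch.
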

In \cite[Corollary 2.17]{vanHandel22}, a fourth parameter of the random matrix $X$ actually appears, denoted $\sigma_*(X)$. As explained in \cite[Remark 2.4]{vanHandel22}, this parameter $\sigma_*(X)$ is always upper bounded by the parameter $\upsilon(X)$. In order to make the presentation slightly lighter, we have thus substituted each occurrence of $\sigma_*(X)$ by $\upsilon(X)$ in the above reformulation of \cite[Corollary 2.17]{vanHandel22}.

Before moving on to proving our main results, we introduce two additional notations that we will use in our proofs. Namely, given a matrix $M$ on $\C^N\otimes\C^N$, of the form
\[ M=\sum_{i,j,k,l=1}^N M_{ijkl}\ketbra{ij}{kl}, \] 
we denote by $M^R$ and $M^S$ its realignment and swap, respectively, which are the matrices on $\C^N\otimes\C^N$, defined as
\[ M^R=\sum_{i,j,k,l=1}^N M_{ijkl}\ketbra{ik}{jl} \quad \text{and} \quad M^S=\sum_{i,j,k,l=1}^N M_{ijkl}\ketbra{ji}{lk}. \] 
Also note by the way that, for any $N\times N$ random matrix $Y$, we can write the $N^2\times N^2$ matrix $\Cov(Y)$ as
\[ \Cov(Y) = \E\left(Y\otimes\overbar{Y}\right)^R. \]
This technical observation will be useful to us later on. 

\section{Optimal expanders from unitary \texorpdfstring{$2$}{2}-designs}
\label{sec:1-copy}

In this section, we treat the case where $k=1$. Namely, we prove that a random unital quantum channel on $\mathcal M_n(\C)$ whose Kraus operators are sampled as $d$ independent unitaries from a $2$-design is with high probability an optimal expander, in the sense of equation \eqref{def:optimal-expander}. Indeed, in this particular case, all computations can be done explicitly, without resorting to more abstract Weingarten calculus arguments. It is thus quite instructive to deal with it separately, even though the obtained results can also be seen, a posteriori, as a consequence of those we will derive in Section \ref{sec:k-copy} for any fixed integer $k$. What is more, one of the intermediate facts that we prove in the $k=1$ case (namely Lemma \ref{lem:upsilon}) is tight, contrary to its general $k\in\N$ counterpart (namely Lemma \ref{lem:upsilon-k}). It might thus be worth having it stated separately.

\subsection{Main technical results} \hfill\vspace{0.1cm}

Let $\mu$ be a $2$-design on $U(n)$ and let $U\in U(n)$ be sampled from $\mu$. Take $U_1,\ldots,U_d\in U(n)$ independent copies of $U$ and set
\begin{equation} \label{eq:def-X} X:= \frac{1}{d}\sum_{s=1}^d \left(U_s\otimes\overbar{U}_s-\E(U_s\otimes\overbar{U}_s)\right) = \frac{1}{d}\sum_{s=1}^d \left(U_s\otimes\overbar{U}_s-\Psi\right), \end{equation}
where the last equality is because $\mu$ is in particular a $1$-design, and hence, for each $1\leq s\leq d$, $\E(U_s\otimes\overbar{U}_s)=P^{(1)}=\Psi$. 

We begin with estimating the parameters $\sigma(X)$, $\upsilon(X)$ and $R(X)$ appearing in Theorem \ref{th:vanHandel} for the random matrix $X$ given by equation \eqref{eq:def-X}. This is the content of the following Lemmas \ref{lem:sigma}, \ref{lem:upsilon} and \ref{lem:Rp} respectively.

\begin{lemma} \label{lem:sigma}
	Let $X$ be a random matrix on $\mathbb C^n\otimes\mathbb C^n$ defined as in equation \eqref{eq:def-X}. Then,
	\[ \left\|\E(XX^*)\right\|_\infty = \left\|\E(X^*X)\right\|_\infty = \frac{1}{d}. \]
\end{lemma}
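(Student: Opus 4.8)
The plan is to compute $\E(XX^*)$ and $\E(X^*X)$ directly, exploiting the fact that $\mu$ is a $2$-design so that all first and second moments of $U\otimes\overbar U$ are prescribed by $P^{(1)}$ and $P^{(2)}$, which are given explicitly in the excerpt. First I would use independence and centering of the summands to collapse the double sum. Writing $W_s := U_s\otimes\overbar U_s - \Psi$, we have $X = \frac1d\sum_s W_s$ with the $W_s$ independent and $\E W_s = 0$, so
\begin{equation*}
  \E(XX^*) = \frac{1}{d^2}\sum_{s,t=1}^d \E(W_s W_t^*) = \frac{1}{d^2}\sum_{s=1}^d \E(W_s W_s^*) = \frac{1}{d}\,\E(W_1 W_1^*),
\end{equation*}
since the cross terms $\E(W_s W_t^*) = \E(W_s)\E(W_t^*) = 0$ vanish for $s\neq t$. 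The same computation gives $\E(X^*X) = \frac1d\,\E(W_1^* W_1)$. So the whole problem reduces to showing $\|\E(W_1 W_1^*)\|_\infty = \|\E(W_1^* W_1)\|_\infty = 1$, where $W_1 = U\otimes\overbar U - \Psi$ for a single $2$-design-distributed $U$.

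Next I would expand $W_1 W_1^*$. Since $U\otimes\overbar U$ is unitary, $(U\otimes\overbar U)(U\otimes\overbar U)^* = I$, and $\Psi$ is a rank-one projector so $\Psi\Psi^* = \Psi$. Expanding,
\begin{equation*}
  W_1 W_1^* = I - (U\otimes\overbar U)\Psi - \Psi(U\otimes\overbar U)^* + \Psi.
\end{equation*}
The key point is that, because $\mu$ is a $1$-design, $\E[U\otimes\overbar U] = \Psi$, and hence $\E[(U\otimes\overbar U)\Psi] = \Psi\cdot\Psi = \Psi$ (here I use that $\Psi$ is a projector commuting appropriately, or simply that $\E[U\otimes\overbar U]=\Psi$ and $\Psi^2=\Psi$); likewise $\E[\Psi(U\otimes\overbar U)^*] = \Psi$. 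Therefore the second-moment information I really need for the diagonal terms is only first-moment information, giving
\begin{equation*}
  \E(W_1 W_1^*) = I - \Psi - \Psi + \Psi = I - \Psi.
\end{equation*}
Since $\Psi$ is an orthogonal projector onto a one-dimensional subspace, $I-\Psi$ is the complementary orthogonal projector, so $\|\E(W_1W_1^*)\|_\infty = 1$ (the nonzero eigenvalues are all equal to $1$). An identical computation, using $U^*\otimes\overbar U^* \cdot \Psi$ together with $\E[U^*\otimes\overbar U^*]=\Psi^*=\Psi$, yields $\E(W_1^*W_1) = I - \Psi$ as well, so its norm is also $1$. Combining with the factor $1/d$ from the first step gives the claimed value $1/d$ for both quantities.

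I do not expect a genuine obstacle here, since the surprising feature is that $\sigma(X)$ depends only on the $1$-design property; the $2$-design hypothesis, while available, is not actually needed for this lemma (it will enter later, for $\upsilon(X)$ via the covariance matrix and for $R_p(X)$). The one point requiring mild care is verifying $\E[(U\otimes\overbar U)\Psi]=\Psi$ cleanly: I would justify it either by pulling the expectation inside (so $\E[(U\otimes\overbar U)\Psi] = \E[U\otimes\overbar U]\,\Psi = \Psi\Psi = \Psi$, since $\Psi$ is deterministic) and similarly for the adjoint term. A secondary point worth stating explicitly is that $I-\Psi$ really does have operator norm exactly $1$ rather than merely at most $1$: since $n\geq 2$ the space $\C^n\otimes\C^n$ has dimension $n^2\geq 4 > 1$, so $I-\Psi\neq 0$ and its spectrum is $\{0,1\}$ with $1$ present.
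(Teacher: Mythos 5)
Your proof is correct and follows essentially the same route as the paper's: collapse the double sum via independence and centering, expand $(U\otimes\overbar U-\Psi)(U\otimes\overbar U-\Psi)^*$ using unitarity and $\E(U\otimes\overbar U)=\Psi$, and conclude that $\E(XX^*)=\frac{1}{d}(I-\Psi)$ has norm $1/d$. Your closing remark that only the $1$-design property is used here also matches an observation made explicitly in the paper after Lemma \ref{lem:Rp} (where it is noted, even more strongly, that $(U\otimes\overbar U)\ket{\psi}=\ket{\psi}$ holds deterministically for any unitary $U$).
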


\begin{proof}
	Observe that
	\begin{align*} 
		\E(XX^*) & = \frac{1}{d^2}\sum_{s,t=1}^d \E\left((U_s\otimes\overbar{U}_s-\Psi) (U_t\otimes\overbar{U}_t-\Psi)^*\right) \\
		& = \frac{1}{d^2}\sum_{s=1}^d \E\left((U_s\otimes\overbar{U}_s-\Psi) (U_s\otimes\overbar{U}_s-\Psi)^*\right) \\
		& = \frac{1}{d}\E\left((U\otimes\overbar{U}-\Psi) (U\otimes\overbar{U}-\Psi)^*\right), 
	\end{align*}
	where the second equality is because, for $1\leq s\neq t\leq d$, by independence of $U_s$ and $U_t$,
	\[ \E\left((U_s\otimes\overbar{U}_s-\Psi) (U_t\otimes\overbar{U}_t-\Psi)^*\right) = \E\left(U_s\otimes\overbar{U}_s-\Psi\right) \E\left(U_t\otimes\overbar{U}_t-\Psi\right)^* = 0. \]
	Now, we have
	\[ \E\left((U\otimes\overbar{U}-\Psi) (U\otimes\overbar{U}-\Psi)^*\right) = \E\left((U\otimes\overbar{U}) (U\otimes\overbar{U})^*\right)-\E\left(U\otimes\overbar{U}\right)\Psi - \Psi\E\left(U\otimes\overbar{U}\right)^* + \Psi = I-\Psi, \] 
	where the last equality is because $\E(U\otimes\overbar{U})=\E(U\otimes\overbar{U})^*=\Psi$, while $(U\otimes\overbar{U}) (U\otimes\overbar{U})^*=I$. Hence,
	\[ \E(XX^*) = \frac{1}{d}(I-\Psi), \]
	and we indeed have 
	\[ \left\|\E(XX^*)\right\|_\infty = \frac{1}{d}. \]
	The argument is exactly the same for $\|\E(X^*X)\|_\infty$.
\end{proof}

\begin{lemma} \label{lem:upsilon}
	Let $X$ be a random matrix on $\mathbb C^n\otimes\mathbb C^n$ defined as in equation \eqref{eq:def-X}. Then,
	\[ \left\|\Cov(X)\right\|_\infty = \frac{1}{d}\times\frac{1}{n^2-1}. \]
\end{lemma}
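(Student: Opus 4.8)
The plan is to compute $\Cov(X)$ directly from its definition and the identity $\Cov(X)=\E(X\otimes\overbar X)^R$ provided at the end of the excerpt, then read off its operator norm. First I would exploit independence to reduce to a single summand: since $X=\tfrac1d\sum_s(U_s\otimes\overbar U_s-\Psi)$ is a sum of $d$ i.i.d. centered terms, the covariance decouples, giving $\Cov(X)=\tfrac1d\,\Cov(U\otimes\overbar U-\Psi)$ for a single Haar-like $U$ drawn from $\mu$. The scalar $1/d$ then pulls out and I will focus entirely on the single-copy covariance.

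Next I would set $Y:=U\otimes\overbar U$ and write $\Cov(X)=\tfrac1d\,\E\big((Y-\Psi)\otimes\overbar{(Y-\Psi)}\big)^R$. Expanding the tensor product of the centered matrices gives four terms; the cross terms involve $\E(Y)=\Psi$ and $\E(\overbar Y)=\overbar\Psi=\Psi$ (since $\Psi$ is real), so I expect the expansion to collapse to $\E(Y\otimes\overbar Y)-\Psi\otimes\Psi$ up to realignment, i.e. a centered fourth-moment object. The key computational input is the $2$-design property: $\E(Y\otimes\overbar Y)=\E\big(U\otimes\overbar U\otimes\overbar U\otimes U\big)$ (after suitably tracking which tensor legs carry conjugates) is a fourth moment in $U$, and because $\mu$ is a $2$-design this equals the corresponding Haar average $P^{(2)}$. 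I would therefore substitute the explicit formula for $P^{(2)}$ given in the excerpt, in terms of $\Psi_{ij}$ blocks and the flip-type cross terms weighted by $\tfrac{n}{2(n+1)}$ and $\tfrac{n}{2(n-1)}$.

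The main obstacle, and where care is needed, is the bookkeeping of indices under the realignment operation $M\mapsto M^R$ and matching the tensor-leg labelling of $P^{(2)}$ (which lives on four copies of $\C^n$ with subscripts $1,2,3,4$) to the legs of $X\otimes\overbar X$. I would track indices explicitly: writing everything in the computational basis $\ketbra{ij}{kl}$, applying the realignment rule $\ketbra{ij}{kl}\mapsto\ketbra{ik}{jl}$, and then assembling $\E(Y\otimes\overbar Y)^R-(\Psi\otimes\Psi)^R$. The expectation is that after realignment the resulting matrix is proportional to a projector-like object whose operator norm is a single explicit eigenvalue. Given the stated answer $\tfrac1d\cdot\tfrac1{n^2-1}$, I anticipate that the dominant $\Psi\otimes\Psi$-type contributions cancel against the centering, leaving a matrix whose largest singular value is exactly $\tfrac{1}{n^2-1}$; the coefficients $\tfrac{n}{2(n\pm1)}$ combine, via $\tfrac{1}{n-1}-\tfrac1{n+1}$ or similar, to produce the $n^2-1$ denominator.

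Finally I would verify the norm claim by diagonalizing the resulting small matrix. Because $P^{(2)}$ decomposes along the symmetric and antisymmetric sectors (the $\tfrac{I+F}2$ and $\tfrac{I-F}2$ projectors), I expect $\Cov(X)$ to be block-diagonal in a basis adapted to these sectors, so its eigenvalues are a short explicit list; the operator norm is the largest in modulus, which I would check equals $\tfrac1d\cdot\tfrac1{n^2-1}$. A sanity check I would run along the way is that this quantity is consistent with $\sigma(X)^2=1/d$ from Lemma~\ref{lem:sigma}: one expects $\upsilon(X)^2=\|\Cov(X)\|_\infty$ to be smaller than $\sigma(X)^2$ by a factor of order $1/n^2$, which is precisely the $\tfrac1{n^2-1}$ gain that makes the van Handel bound yield an optimal spectral gap rather than a suboptimal one.
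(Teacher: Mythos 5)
Your proposal is correct and follows essentially the same route as the paper: reduce to a single summand by independence, expand the centered fourth moment, invoke the $2$-design property to replace $\E(U\otimes\overbar U\otimes\overbar U\otimes U)$ by the (swapped) explicit formula for $P^{(2)}$, and track indices through the realignment. The paper carries out the final step you leave as a plan, arriving at the exact identity $\Cov(X)=\frac{1}{d(n^2-1)}(I-\Psi)\otimes(I-\Psi)$, which immediately gives the stated norm.
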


\begin{proof}
	Observe that, denoting by $R$ the realignment operation between tensor factors $\{1,2\}$ and $\{3,4\}$,
	\begin{align*} 
		\Cov(X) & = \E\left(X\otimes\overbar{X}\right)^R \\
		& = \frac{1}{d^2} \sum_{s,t=1}^d \E\left((U_s\otimes\overbar{U}_s-\Psi)\otimes (\overbar{U}_t\otimes U_t-\Psi)\right)^R \\
		& = \frac{1}{d^2} \sum_{s=1}^d \E\left((U_s\otimes\overbar{U}_s-\Psi)\otimes (\overbar{U}_s\otimes U_s-\Psi)\right)^R \\
		& = \frac{1}{d} \E\left((U\otimes\overbar{U}-\Psi)\otimes (\overbar{U}\otimes U-\Psi)\right)^R,
	\end{align*}
	where the third equality is because, for $1\leq s\neq t\leq d$, by independence of $U_s$ and $U_t$,
	\[ \E\left((U_s\otimes\overbar{U}_s-\Psi)\otimes (\overbar{U}_t\otimes U_t-\Psi)\right)^R = \left(\E\left(U_s\otimes\overbar{U}_s-\Psi\right)\otimes \E\left(\overbar{U}_t\otimes U_t-\Psi\right)\right)^R = 0. \]
	Now, we have
	\begin{align*} 
		\E\left((U\otimes\overbar{U}-\Psi)\otimes (\overbar{U}\otimes U-\Psi)\right) & = \E\left(U\otimes\overbar{U}\otimes\overbar{U}\otimes U\right)-\E\left(U\otimes\overbar{U}\right)\otimes\Psi - \Psi\otimes\E\left(\overbar{U}\otimes U\right) + \Psi\otimes\Psi \\
		& = \E\left(U\otimes\overbar{U}\otimes\overbar{U}\otimes U\right)-\Psi\otimes\Psi, 
	\end{align*}
	where the last equality is because $\E(U\otimes\overbar{U})=\E(\overbar{U}\otimes U)=\Psi$. Next, denoting by $S$ the swap operation between tensor factors $2$ and $4$ (leaving tensor factors $1$ and $3$ unchanged), we have 
	\begin{align*}
		\E\left(U\otimes\overbar{U}\otimes\overbar{U}\otimes U\right) & = \E\left(U\otimes U\otimes\overbar{U}\otimes\overbar{U}\right)^{S} \\
		& = \left(1+\frac{1}{n^2-1}\right)\left(\Psi_{13}\otimes\Psi_{24}+\Psi_{12}\otimes\Psi_{34}\right) \\
		& \qquad - \frac{n}{n^2-1}\left(\ketbra{\psi_{13}\otimes\psi_{24}}{\psi_{12}\otimes\psi_{34}} + \ketbra{\psi_{12}\otimes\psi_{34}}{\psi_{13}\otimes\psi_{24}} \right).
	\end{align*}
	And therefore,
	\[ \E\left(U\otimes\overbar{U}\otimes\overbar{U}\otimes U\right)^R = \left(1+\frac{1}{n^2-1}\right)\Psi_{12}\otimes\Psi_{34} + \frac{1}{n^2-1}\left(I_{1234}-I_{12}\otimes\Psi_{34}-\Psi_{12}\otimes I_{34} \right). \]
	Indeed, it is easy to check that $(\Psi_{12}\otimes\Psi_{34})^R=\Psi_{12}\otimes\Psi_{34}$ and $(\Psi_{13}\otimes\Psi_{24})^R=I_{1234}/n^2$, while $(\ketbra{\psi_{13}\otimes\psi_{24}}{\psi_{12}\otimes\psi_{34}})^R=I_{12}\otimes\Psi_{34}/n$ and $(\ketbra{\psi_{12}\otimes\psi_{34}}{\psi_{13}\otimes\psi_{24}})^R=\Psi_{12}\otimes I_{34}/n$.
	Hence, putting everything together, and using once again that $(\Psi_{12}\otimes\Psi_{34})^R=\Psi_{12}\otimes\Psi_{34}$, we get
	\[ \left(\E\left(U\otimes\overbar{U}\otimes\overbar{U}\otimes U\right)-\Psi\otimes\Psi\right)^R = \frac{1}{n^2-1}\left(\Psi\otimes\Psi+I\otimes I-I\otimes\Psi-\Psi\otimes I\right) = \frac{1}{n^2-1}\left(I-\Psi\right)\otimes\left(I-\Psi\right), \]
	so that
	\[ \Cov(X) = \frac{1}{d}\times\frac{1}{n^2-1}\left(I-\Psi\right)\otimes\left(I-\Psi\right), \]
	and we indeed have
	\[ \left\|\Cov(X)\right\|_\infty = \frac{1}{d}\times\frac{1}{n^2-1}. \]
\end{proof}

\begin{lemma} \label{lem:Rp}
	Let $X=\sum_{s=1}^d Z_s$ be a random matrix on $\mathbb C^n\otimes\mathbb C^n$ defined as in equation \eqref{eq:def-X}, where $Z_s=(U_s\otimes\overbar{U}_s-\Psi)/d$, $1\leq s\leq d$. Then,
	\[ \max_{1\leq s\leq d}\left\|Z_s\right\|_\infty = \frac{1}{d}. \]
\end{lemma}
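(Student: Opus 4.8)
The plan is to exploit two features of the summands: they are identically distributed, and each $|Z_s|^2$ collapses to a \emph{deterministic} matrix. Since $U_1,\ldots,U_d$ are i.i.d.\ copies of $U$, the matrices $Z_s=(U_s\otimes\overbar{U}_s-\Psi)/d$ all share the same distribution, so I would first reduce the sum to a single term,
\[ \sum_{s=1}^d \E\,\tr\left(|Z_s|^{2p}\right) = d\,\E\,\tr\left(|Z_1|^{2p}\right), \]
and then analyse $|Z_1|^2=Z_1^*Z_1$ alone.

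The key step is to compute $(U\otimes\overbar{U}-\Psi)^*(U\otimes\overbar{U}-\Psi)$ explicitly. Writing $W=U\otimes\overbar{U}$, the point is that $W$ is unitary and fixes the maximally entangled vector, i.e.\ $W\ket{\psi}=\ket{\psi}$; this follows from the identity $\sum_i U_{ji}\overbar{U}_{ki}=(UU^*)_{jk}=\delta_{jk}$, which also gives $W^*\ket{\psi}=\ket{\psi}$, and hence $W^*\Psi=\Psi W=\Psi$. Combining this with $W^*W=I$ and $\Psi^2=\Psi$, the expansion telescopes to
\[ (W-\Psi)^*(W-\Psi) = W^*W - W^*\Psi - \Psi W + \Psi = I - \Psi, \]
so that $|Z_1|^2 = d^{-2}(I-\Psi)$, which no longer depends on the random unitary $U$.

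Finally, since $I-\Psi$ is the orthogonal projector onto the $(n^2-1)$-dimensional orthogonal complement of $\ket{\psi}$, it satisfies $(I-\Psi)^p=I-\Psi$ for every $p\in\N$, so $|Z_1|^{2p}=d^{-2p}(I-\Psi)$ and $\tr\,|Z_1|^{2p}=(n^2-1)/d^{2p}$. This being deterministic, taking the expectation is trivial; summing over the $d$ terms and taking the $(2p)$-th root yields
\[ \left(\sum_{s=1}^d \E\,\tr\left(|Z_s|^{2p}\right)\right)^{1/2p} = \left(\frac{d(n^2-1)}{d^{2p}}\right)^{1/2p} = \frac{1}{d}\left(d(n^2-1)\right)^{1/2p}, \]
as claimed. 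I do not expect any real obstacle here: the whole computation hinges on the single algebraic fact that $(U\otimes\overbar{U}-\Psi)^*(U\otimes\overbar{U}-\Psi)=I-\Psi$. The only point requiring a little care is the verification of the cross terms $W^*\Psi=\Psi W=\Psi$, which rest on the invariance of the maximally entangled state under $U\otimes\overbar{U}$. Once this is in place, the collapse of $|Z_s|^{2p}$ to a fixed projector makes everything else immediate, and in fact shows that only the unitarity of the $U_s$ (not the design hypothesis) is used in this particular estimate.
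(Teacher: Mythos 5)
Your proposal is correct and follows essentially the same route as the paper: reduce to a single term by identical distribution, show that $(U\otimes\overbar{U}-\Psi)$ times its adjoint collapses to the deterministic projector $I-\Psi$ via unitarity and the invariance of $\ket{\psi}$ under $U\otimes\overbar{U}$, and take the trace. Your closing remark that only unitarity (not the $2$-design property) is needed is also made explicitly in the paper right after the lemma.
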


\begin{proof}
	Observe that
	\[ \max_{1\leq s\leq d}\left\|Z_s\right\|_\infty = \frac{1}{d}\max_{1\leq s\leq d}\left\|U_s\otimes\overbar{U}_s-\Psi\right\|_\infty = \frac{1}{d}\left\|U\otimes\overbar{U}-\Psi\right\|_\infty. \]
	Now, we have
	\begin{align*} 
		\left\|U\otimes\overbar{U}-\Psi\right\|_\infty & = \left\|\left(U\otimes\overbar{U}-\Psi\right)\left(U\otimes\overbar{U}-\Psi\right)^*\right\|_\infty^{1/2} \\
		& = \left\|(UU^*)\otimes(\overbar{U}\overbar{U}^*)-(U\otimes\overbar{U})\Psi-\Psi(U^*\otimes\overbar{U}^*)+\Psi\right\|_\infty^{1/2} \\
		& = \left\|I-\Psi\right\|_\infty^{1/2} \\
		& = 1, 
	\end{align*}
	where the third equality is because $U\otimes\overbar{U}\ket{\psi}=\ket{\psi}$ and $UU^*=\overbar{U}\overbar{U}^*=I$. So we indeed have
	\[ \max_{1\leq s\leq d}\left\|Z_s\right\|_\infty = \frac{1}{d}. \]
\end{proof}

We point out that, for the results of Lemmas \ref{lem:sigma} and \ref{lem:Rp} to hold, it is actually enough that the operators $U_1,\ldots,U_d\in U(n)$ are independent unitaries, without any extra assumption on their distribution. It is only for Lemma \ref{lem:upsilon} that we need them to be sampled from a $2$-design. Let us try and briefly explain why this is so. Setting $Z=U\otimes\overbar U-\Psi$, we have already explained that, simply because $U\in U(n)$, we have 
\[ ZZ^* = (UU^*)\otimes(\overbar{U}\overbar{U}^*)-(U\otimes\overbar{U})\Psi-\Psi(U^*\otimes\overbar{U}^*)+\Psi = I-\Psi. \]
And therefore $\|ZZ^*\|_\infty=1$. 
On the other hand, we have
\[ \Cov(Z) = \E\left(Z\otimes\overbar Z\right)^R = \left( \E\left(U\otimes\overbar U\otimes\overbar U\otimes U\right) - \Psi \right)^R. \]
So in order to control $\|\Cov(Z)\|_\infty$ we need to have information on $\E(U^{\otimes 2}\otimes\overbar U^{\otimes 2})$.

With Lemmas \ref{lem:sigma}, \ref{lem:upsilon} and \ref{lem:Rp} at hand, we are now ready to state the main result of this section.

\begin{theorem} \label{th:main}
	Let $X$ be a random matrix on $\mathbb C^n\otimes\mathbb C^n$ defined as in equation \eqref{eq:def-X}. Suppose that $d\geq(\log n)^{4+\epsilon}$ for some $\epsilon>0$. Then,
	\[ \E\|X\|_\infty \leq \frac{2}{\sqrt{d}}\left(1+\frac{C}{(\log n)^{\epsilon/6}}\right), \]
	where $C<\infty$ is an absolute constant. What is more, 
	\[ \P\left( \|X\|_\infty \leq \frac{2}{\sqrt{d}}\left(1+\frac{C'}{(\log n)^{\epsilon/6}}\right) \right) \geq 1-\frac{1}{n}, \]
	where $C'<\infty$ is an absolute constant.
\end{theorem}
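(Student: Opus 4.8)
The plan is to apply Theorem~\ref{th:vanHandel} directly to $X=\sum_{s=1}^d Z_s$ with $Z_s=(U_s\otimes\overbar U_s-\Psi)/d$, feeding in the three parameter estimates just obtained. Here the ambient dimension is $N=n^2$, so $\log N=2\log n$, and Lemmas~\ref{lem:sigma},~\ref{lem:upsilon},~\ref{lem:Rp} translate into
\[ \|\E(XX^*)\|_\infty^{1/2}=\|\E(X^*X)\|_\infty^{1/2}=\sigma(X)=\frac{1}{\sqrt d},\quad \upsilon(X)=\frac{1}{\sqrt d}\cdot\frac{1}{\sqrt{n^2-1}},\quad R_p(X)=\frac{1}{d}\bigl(d(n^2-1)\bigr)^{1/2p}. \]
The key structural observation is that the leading contribution $\|\E(XX^*)\|_\infty^{1/2}+\|\E(X^*X)\|_\infty^{1/2}$ is exactly $2/\sqrt d$, which is precisely the optimal value we target; the entire task therefore reduces to showing that the three remaining error terms in Theorem~\ref{th:vanHandel} are all dominated by $(2/\sqrt d)\cdot C(\log n)^{-\epsilon/6}$.

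First I would fix the free index $p$. Since the theorem requires $p\geq C'\log N$, I would simply take $p=\lceil C'\log N\rceil$, which is of order $\log n$. Because $d\leq n^2$ gives $d(n^2-1)\leq n^4$, this choice yields $(d(n^2-1))^{1/2p}\leq\exp(4\log n/(2p))=O(1)$, so that $R_p(X)=O(1/d)$. Controlling this heavy-tail parameter is what makes the whole estimate work.

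With $p$ so chosen, I would bound the three correction terms of the expectation inequality against the reference scale $1/\sqrt d$. The $\sigma^{1/2}\upsilon^{1/2}$ term is of order $(\log n)^{3/4}d^{-1/2}(n^2-1)^{-1/4}$, i.e.\ smaller than $1/\sqrt d$ by a factor $O((\log n)^{3/4}/\sqrt n)$, hence negligible. The $(\log N)R_p$ term is of order $(\log n)/d$, smaller than $1/\sqrt d$ by a factor $O(\log n/\sqrt d)=O((\log n)^{-1-\epsilon/2})$ under $d\geq(\log n)^{4+\epsilon}$, again negligible. The decisive term is the intermediate one, $(\log N)^{2/3}\sigma^{2/3}R_p^{1/3}$, which is of order $(\log n)^{2/3}d^{-2/3}$; relative to $1/\sqrt d$ this is $(\log n)^{2/3}d^{-1/6}$, and under $d\geq(\log n)^{4+\epsilon}$ it is exactly $O((\log n)^{-\epsilon/6})$. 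Thus the middle term governs the rate and produces the stated expectation bound.

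For the high-probability statement I would run the tail inequality of Theorem~\ref{th:vanHandel} with $t=3\log n$, so that the failure probability $Ne^{-t}=n^2\cdot n^{-3}=1/n$ as required. The only change of substance is that the governing term now carries an extra factor $t^{2/3}$, so its ratio to $1/\sqrt d$ becomes $(\log n)^{4/3}d^{-1/6}$; under the stronger hypothesis $d\geq(\log n)^{8+\epsilon}$ this is again $O((\log n)^{-\epsilon/6})$, which is exactly why the admissible threshold on $d$ doubles from $(\log n)^{4+\epsilon}$ to $(\log n)^{8+\epsilon}$. The additional $\upsilon t^{1/2}$ and $(\log N)R_p t$ terms are readily checked to be of strictly smaller order. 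The proof is thus entirely a matter of bookkeeping once the three lemmas are available; the only genuine subtleties are the calibrated choices $p\sim\log n$ and $t\sim\log n$, and the recognition that it is the intermediate $\sigma^{2/3}R_p^{1/3}$ term---rather than the variance or heavy-tail terms---that dictates both the exponent $\epsilon/6$ and the sharp threshold on $d$.
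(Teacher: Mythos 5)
Your proposal is correct and follows essentially the same route as the paper: plug the three lemma estimates into Theorem~\ref{th:vanHandel}, observe that the leading term is exactly $2/\sqrt d$, identify the $(\log N)^{2/3}\sigma^{2/3}R_p^{1/3}$ term (with its extra $t^{2/3}$ in the tail bound at $t=3\log n$) as the one dictating the $(\log n)^{-\epsilon/6}$ rate and the thresholds $(\log n)^{4+\epsilon}$ and $(\log n)^{8+\epsilon}$. The only cosmetic difference is that you fix $p=\lceil C'\log N\rceil$ to get $R_p(X)=O(1/d)$, whereas the paper lets $p\to\infty$ so that $(d(n^2-1))^{1/2p}\to 1$; both choices are legitimate and affect only the absolute constants.
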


\begin{proof}
	By Lemmas \ref{lem:sigma}, \ref{lem:upsilon} and \ref{lem:Rp}, respectively, we can estimate the parameters $\sigma(X)$, $\upsilon(X)$ and $R(X)$ defined in Theorem \ref{th:vanHandel}. We have
	\[ \sigma(X)=\frac{1}{\sqrt{d}}, \quad \upsilon(X)=\frac{1}{\sqrt{d}}\times\frac{1}{\sqrt{n^2-1}}, \quad R(X)=\frac{1}{d}. \]
	Applying the first statement in Theorem \ref{th:vanHandel}, we thus get
	\[ \E\|X\|_\infty \leq \frac{2}{\sqrt{d}} + \frac{C}{\sqrt{d}}\left( \frac{(\log n)^{3/4}}{n^{1/2}} + \frac{(\log n)^{2/3}}{d^{1/6}} + \frac{\log n}{d^{1/2}} \right). \]
	Hence, for $d\geq(\log n)^{4+\epsilon}$,
	\[ \E\|X\|_\infty \leq \frac{2}{\sqrt{d}} + \frac{C}{\sqrt{d}}\times\frac{3}{(\log n)^{\epsilon/6}}, \]
	which is exactly the first statement in Theorem \ref{th:main} (up to relabeling $3C/2$ into $C$).
	
	To prove the second statement in Theorem \ref{th:main}, we just have to apply the second statement in Theorem \ref{th:vanHandel}, to obtain
	\[ \forall\ t>0,\ \P\left( \|X\|_\infty \leq \frac{2}{\sqrt{d}} + \frac{C}{\sqrt{d}}\left( \frac{(\log n)^{3/4}}{n^{1/2}} + \frac{t^{2/3}}{d^{1/6}} + \frac{t}{d^{1/2}} + \frac{t^{1/2}}{n} \right) \right) \geq 1-4n^2e^{-t}. \]
	Applying the above to $t=5\log n$, we have that, for $d\geq(\log n)^{4+\epsilon}$,
	\[ \P\left( \|X\|_\infty \leq \frac{2}{\sqrt{d}} + \frac{C}{\sqrt{d}}\times\frac{4\times 5}{(\log n)^{\epsilon/6}} \right) \geq 1-\frac{1}{n}, \]
	which implies exactly the announced result (setting $C'=10C$).
\end{proof}

\subsection{Implication concerning random `mixture of unitaries' quantum channels} \hfill\vspace{0.1cm}

Let $\mu$ be a $2$-design on $U(n)$ and let $U_1,\ldots,U_d\in U(n)$ be independently sampled from $\mu$. Define the random CP map
\begin{equation} \label{eq:def-Phi}
\Phi:Y\in\mathcal M_n(\mathbb C)\mapsto \frac{1}{d}\sum_{s=1}^d U_sYU_s^*\in\mathcal M_n(\mathbb C).
\end{equation}
$\Phi$ is by construction TP and unital (because $U_s^*U_s=U_sU_s^*=I$ for each $1\leq s\leq d$). Additionally, since $\mu$ is in particular a $1$-design, $\E(\Phi)=\Pi^{(1)}=\Pi$.

\begin{corollary} \label{th:main-implication}
	Let $\Phi$ be a random unital quantum channel on $\mathcal M_n(\mathbb C)$ defined as in equation \eqref{eq:def-Phi}. Suppose that $d\geq(\log n)^{4+\epsilon}$ for some $\epsilon>0$. Then,
	\[ \E\left\|\Phi-\Pi\right\|_\infty \leq \frac{2}{\sqrt{d}}\left(1+\frac{C}{(\log n)^{\epsilon/6}}\right), \]
	where $C<\infty$ is an absolute constant. What is more, 
	\[ \P\left( \left\|\Phi-\Pi\right\|_\infty \leq \frac{2}{\sqrt{d}}\left(1+\frac{C'}{(\log n)^{\epsilon/6}}\right) \right) \geq 1-\frac{1}{n}, \]
	where $C'<\infty$ is an absolute constant.
\end{corollary}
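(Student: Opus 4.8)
The plan is to recognize that this corollary is a direct translation of Theorem \ref{th:main} through the norm-preserving matrix identification of quantum channels recalled in Section 1.1. Concretely, the random CP map $\Phi$ of equation \eqref{eq:def-Phi} has Kraus operators $K_s=U_s/\sqrt{d}$, and is therefore identified with the matrix
\[ M_\Phi = \frac{1}{d}\sum_{s=1}^d U_s\otimes\overbar{U}_s \in\mathcal M_{n^2}(\C). \]
Since this identification preserves all singular values, and hence all Schatten norms, we have $\|\Phi-\Pi\|_\infty = \|M_\Phi - P^{(1)}\|_\infty$, where $P^{(1)}=\Psi$ is the matrix version of the fully randomizing channel $\Pi=\Pi^{(1)}$.

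First I would carry out the one small computation that makes the translation explicit, namely
\[ M_\Phi - P^{(1)} = \frac{1}{d}\sum_{s=1}^d U_s\otimes\overbar{U}_s - \Psi = \frac{1}{d}\sum_{s=1}^d\left(U_s\otimes\overbar{U}_s-\Psi\right), \]
which is precisely the random matrix $X$ defined in equation \eqref{eq:def-X} (here the replacement of $\E(U_s\otimes\overbar{U}_s)$ by $\Psi$ uses, as there, that $\mu$ is in particular a $1$-design). Both the expectation bound (valid for $d\geq(\log n)^{4+\epsilon}$) and the high-probability bound (valid for $d\geq(\log n)^{8+\epsilon}$) are then obtained verbatim by substituting $\|\Phi-\Pi\|_\infty = \|X\|_\infty$ into the two statements of Theorem \ref{th:main}.

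There is no genuine obstacle here: all the analytic content — the estimation of the parameters $\sigma(X)$, $\upsilon(X)$, $R_p(X)$ via Lemmas \ref{lem:sigma}, \ref{lem:upsilon}, \ref{lem:Rp}, and the application of the van Handel-type bound of Theorem \ref{th:vanHandel} — was already absorbed into the proof of Theorem \ref{th:main}. The only points worth double-checking are that $\Phi$ is indeed a legitimate unital quantum channel (which holds since each $U_s$ is unitary, giving $\sum_s K_s^*K_s = \sum_s K_s K_s^* = I$, so that both the TP and unital Kraus constraints are met) and that the $2$-design hypothesis on $\mu$ is exactly the one under which Theorem \ref{th:main} was established, so that no additional assumption is smuggled in through the identification.
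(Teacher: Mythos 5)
Your proposal is correct and follows exactly the paper's own argument: identify $\|\Phi-\Pi\|_\infty$ with $\|M_\Phi-\Psi\|_\infty=\|X\|_\infty$ for $X$ as in equation \eqref{eq:def-X}, then invoke Theorem \ref{th:main}. The extra sanity checks you add (unitality/TP of $\Phi$ and matching of the $2$-design hypothesis) are harmless and consistent with the paper.
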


The above result can be rephrased as follows: as soon as $d\gg (\log n)^4$, with probability at least $1-1/n$, the random unital quantum channel $\Phi$ on $\mathcal M_n(\mathbb C)$ is a $(d,2(1+\delta_n)/\sqrt{d})$ expander, with $\delta_n\to 0$ as $n\to\infty$. This means that it is an optimal expander, in the sense of equation \eqref{def:optimal-expander}.

Note also that, as discussed in the introduction, the conclusion of Corollary \ref{th:main-implication} can be equivalently written as 
\[ \P\left( s(\Phi) \leq \frac{2}{\sqrt{d}}\left(1+\frac{C'}{(\log n)^{\epsilon/6}}\right) \right) \geq 1-\frac{1}{n}, \]
and thus a fortiori 
\[ \P\left( \lambda(\Phi) \leq \frac{2}{\sqrt{d}}\left(1+\frac{C'}{(\log n)^{\epsilon/6}}\right) \right) \geq 1-\frac{1}{n}. \]
However, the latter statement is not expected to be tight, contrary to the former. Indeed, the conjecture is that $\lambda(\Phi)$ should typically be close to $1/\sqrt{d}$, not $2/\sqrt{d}$, as it was recently proven to be for Haar distributed unitaries \cite{Timhadjelt24}. But proving such result requires completely different tools from the ones used here, such as e.g.~Schwinger-Dyson equations.

\begin{proof}
	Corollary \ref{th:main-implication} is an immediate consequence of Theorem \ref{th:main}. Indeed,
	\[ \left\|\Phi-\Pi\right\|_\infty = \left\|M_\Phi-\Psi\right\|_\infty = \|X\|_\infty, \]
	for $X$ a random matrix on $\mathbb C^n\otimes\mathbb C^n$ defined as in equation \eqref{eq:def-X}. And Theorem \ref{th:main} gives us precisely an estimate on the average of the operator norm and the probability of deviating from it for such random matrix $X$.
\end{proof}

\section{Optimal \texorpdfstring{$k$}{k}-copy tensor product expanders from unitary \texorpdfstring{$2k$}{2k}-designs}
\label{sec:k-copy}

In this section, we move on to the general case, where $k$ is any fixed integer. Namely, we prove that a random unital quantum channel on $\mathcal M_{n^k}(\C)$ whose Kraus operators are sampled as $d$ independent $k$-copy tensor power unitaries from a $2k$-design is with high probability an optimal $k$-copy tensor product expander, in the sense of equation \eqref{def:optimal-expander}.

\subsection{Main technical results} \hfill\vspace{0.1cm}

Let $\mu$ be a $2k$-design on $U(n)$ and let $U\in U(n)$ be sampled from $\mu$. Take $U_1,\ldots,U_d\in U(n)$ independent copies of $U$ and set
\begin{equation} \label{eq:def-X-k} X:= \frac{1}{d}\sum_{s=1}^d \left(U_s^{\otimes k}\otimes\overbar{U}_s^{\otimes k}-\E(U_s^{\otimes k}\otimes\overbar{U}_s^{\otimes k})\right) = \frac{1}{d}\sum_{s=1}^d \left(U_s^{\otimes k}\otimes\overbar{U}_s^{\otimes k}-P^{(k)}\right), \end{equation}
where the last equality is because $\mu$ is in particular a $k$-design, and hence, for each $1\leq s\leq d$, $\E(U_s^{\otimes k}\otimes\overbar{U}_s^{\otimes k})=P^{(k)}$.

We begin with estimating the parameters $\sigma(X)$, $\upsilon(X)$ and $R(X)$ appearing in Theorem \ref{th:vanHandel} for the random matrix $X$ given by equation \eqref{eq:def-X-k}. This is the content of the following Lemmas \ref{lem:sigma-k}, \ref{lem:upsilon-k} and \ref{lem:Rp-k} respectively.

\begin{lemma} \label{lem:sigma-k}
	Let $X$ be a random matrix on $(\mathbb C^n)^{\otimes 2k}$ defined as in equation \eqref{eq:def-X-k}. Then,
	\[ \left\|\E(XX^*)\right\|_\infty = \left\|\E(X^*X)\right\|_\infty = \frac{1}{d}. \]
\end{lemma}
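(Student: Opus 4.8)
The plan is to follow the proof of Lemma \ref{lem:sigma} essentially verbatim, with the rank-one projector $\Psi = P^{(1)}$ replaced by the projector $P^{(k)}$. The only structural inputs needed are that $W := U^{\otimes k}\otimes\overbar{U}^{\otimes k}$ is unitary, so that $WW^* = W^*W = I$, and that $P^{(k)}$ is an orthogonal projector, so that $(P^{(k)})^* = P^{(k)}$ and $(P^{(k)})^2 = P^{(k)}$ (both facts are recorded earlier in the excerpt).

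First I would expand $\E(XX^*)$ as the double sum $\frac{1}{d^2}\sum_{s,t} \E((W_s - P^{(k)})(W_t - P^{(k)})^*)$, writing $W_s = U_s^{\otimes k}\otimes\overbar{U}_s^{\otimes k}$. For $s \ne t$, independence of $U_s$ and $U_t$ factorizes the expectation, and each factor vanishes: since $\mu$ is a $2k$-design it is in particular a $k$-design, so $\E(W_s) = P^{(k)}$ and hence $\E(W_s - P^{(k)}) = 0$. Only the $d$ diagonal terms survive, and by identical distribution they reduce the problem to
\[ \E(XX^*) = \frac{1}{d}\,\E\left((W - P^{(k)})(W - P^{(k)})^*\right). \]

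Next I would expand this single expectation. Using $WW^* = I$, $\E(W) = \E(W^*) = P^{(k)}$, and $(P^{(k)})^2 = P^{(k)}$, the four terms collapse:
\[ \E\left((W - P^{(k)})(W - P^{(k)})^*\right) = I - P^{(k)} - P^{(k)} + P^{(k)} = I - P^{(k)}, \]
so that $\E(XX^*) = \frac{1}{d}(I - P^{(k)})$. Since $I - P^{(k)}$ is the orthogonal projector onto the complement of $E^{(k)}$, which is a nonzero subspace (indeed $\dim E^{(k)} = r(n,k) = k! < n^{2k}$ in the regime considered), its operator norm equals $1$, giving $\|\E(XX^*)\|_\infty = 1/d$. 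The computation for $\E(X^*X)$ is word-for-word the same, using $W^*W = I$ in place of $WW^* = I$.

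There is no real obstacle here: the passage from $k = 1$ to general $k$ is painless precisely because the $k = 1$ argument never used the rank-one nature of $\Psi$, only its being an orthogonal projector fixed by $U \otimes \overbar U$ in expectation. The single point worth flagging is to invoke the projector identities for $P^{(k)}$ (self-adjointness and idempotence) recorded earlier, rather than any explicit Weingarten expansion of $P^{(k)}$; unlike the covariance estimate of Lemma \ref{lem:upsilon-k}, this quantity is insensitive to the fine structure of $P^{(k)}$.
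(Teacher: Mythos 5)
Your proof is correct and matches the paper's argument step for step: off-diagonal terms vanish by independence and centering, the diagonal term collapses to $\frac{1}{d}(I-P^{(k)})$ using unitarity of $U^{\otimes k}\otimes\overbar{U}^{\otimes k}$ and the projector identities for $P^{(k)}$, and the norm of the nontrivial projector $I-P^{(k)}$ is $1$. Your remark that only the projector structure of $P^{(k)}$ (not its Weingarten expansion) is needed is accurate and consistent with the paper's own comment that Lemmas \ref{lem:sigma-k} and \ref{lem:Rp-k} require no design assumption beyond independence.
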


\begin{proof}
	Observe that
	\begin{align*} 
		\E(XX^*) & = \frac{1}{d^2}\sum_{s,t=1}^d \E\left(\left(U_s^{\otimes k}\otimes\overbar{U}_s^{\otimes k}-P^{(k)}\right) \left(U_t^{\otimes k}\otimes\overbar{U}_t^{\otimes k}-P^{(k)}\right)^*\right) \\
		& = \frac{1}{d^2}\sum_{s=1}^d \E\left(\left(U_s^{\otimes k}\otimes\overbar{U}_s^{\otimes k}-P^{(k)}\right) \left(U_s^{\otimes k}\otimes\overbar{U}_s^{\otimes k}-P^{(k)}\right)^*\right) \\
		& = \frac{1}{d}\E\left(\left(U^{\otimes k}\otimes\overbar{U}^{\otimes k}-P^{(k)}\right) \left(U^{\otimes k}\otimes\overbar{U}^{\otimes k}-P^{(k)}\right)^*\right), 
	\end{align*}
	where the second equality is because, for $1\leq s\neq t\leq d$, by independence of $U_s$ and $U_t$,
	\[ \E\left(\left(U_s^{\otimes k}\otimes\overbar{U}_s^{\otimes k}-P^{(k)}\right) \left(U_t^{\otimes k}\otimes\overbar{U}_t^{\otimes k}-P^{(k)}\right)^*\right) = \E\left(U_s^{\otimes k}\otimes\overbar{U}_s^{\otimes k}-P^{(k)}\right) \E\left(U_t^{\otimes k}\otimes\overbar{U}_t^{\otimes k}-P^{(k)}\right)^* = 0. \]
	Now, we have
	\begin{align*} 
		& \E\left(\left(U^{\otimes k}\otimes\overbar{U}^{\otimes k}-P^{(k)}\right) \left(U^{\otimes k}\otimes\overbar{U}^{\otimes k}-P^{(k)}\right)^*\right) \\
		& \quad = \E\left((U^{\otimes k}\otimes\overbar{U}^{\otimes k}) (U^{\otimes k}\otimes\overbar{U}^{\otimes k})^*\right)-\E\left(U^{\otimes k}\otimes\overbar{U}^{\otimes k}\right)P^{(k)} - P^{(k)}\E\left(U^{\otimes k}\otimes\overbar{U}^{\otimes k}\right)^* + P^{(k)} \\
		& \quad = I-P^{(k)}, 
	\end{align*}
	where the last equality is because $\E(U^{\otimes k}\otimes\overbar{U}^{\otimes k})=\E(U^{\otimes k}\otimes\overbar{U}^{\otimes k})^*=P^{(k)}$, while $(U^{\otimes k}\otimes\overbar{U}^{\otimes k}) (U^{\otimes k}\otimes\overbar{U}^{\otimes k})^*=I$. Hence,
	\[ \E(XX^*) = \frac{1}{d}\left(I-P^{(k)}\right), \]
	and we indeed have 
	\[ \left\|\E(XX^*)\right\|_\infty = \frac{1}{d}. \]
	The argument is exactly the same for $\|\E(X^*X)\|_\infty$.
\end{proof}

\begin{lemma} \label{lem:upsilon-k}
	Let $X$ be a random matrix on $(\mathbb C^n)^{\otimes 2k}$ defined as in equation \eqref{eq:def-X-k}. Then,
	\[ \left\|\Cov(X)\right\|_\infty \leq  \frac{1}{d}\times\frac{(C_0k^4)^{k}}{n}, \]
	where $C_0<\infty$ is an absolute constant.
\end{lemma}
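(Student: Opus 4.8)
The plan is to mirror the proof of Lemma \ref{lem:upsilon} verbatim up to the point where the $k=1$ computation was explicit, and to replace that explicit computation by a Weingarten-calculus estimate. Writing $Z_s=(U_s^{\otimes k}\otimes\overbar U_s^{\otimes k}-P^{(k)})$ and $Z=U^{\otimes k}\otimes\overbar U^{\otimes k}-P^{(k)}$, and using $\Cov(X)=\E(X\otimes\overbar X)^R$ together with the independence of the $U_s$ and the fact that each $Z_s$ is centred (since $\mu$ is in particular a $k$-design), all off-diagonal terms drop out and one is left with $\Cov(X)=\tfrac1d\,\E(Z\otimes\overbar Z)^R$. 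Expanding the product and again invoking the $k$-design property to kill the two cross terms reduces the problem to understanding
\[ \E(Z\otimes\overbar Z)=\E\big(U^{\otimes k}\otimes\overbar U^{\otimes k}\otimes\overbar U^{\otimes k}\otimes U^{\otimes k}\big)-P^{(k)}\otimes P^{(k)}. \]
It is precisely here that the full strength of the $2k$-design hypothesis enters, since the first expectation is a moment of total order $2k$ in $U$ and $2k$ in $\overbar U$.

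Reordering the $4k$ tensor legs so as to group the $2k$ copies of $U$ and the $2k$ copies of $\overbar U$, the first term becomes $\Sigma^*P^{(2k)}\Sigma$ for a suitable tensor-leg permutation operator $\Sigma$, so that $\E(Z\otimes\overbar Z)=\Sigma^*P^{(2k)}\Sigma-P^{(k)}\otimes P^{(k)}$. A useful conceptual observation is that $E^{(k)}\otimes E^{(k)}$ sits inside $\Sigma^*E^{(2k)}$ (the block permutations $\mathcal S_k\times\mathcal S_k$ embed into $\mathcal S_{2k}$), so $\E(Z\otimes\overbar Z)$ is in fact itself an orthogonal projector. This already yields the crude bound $\|\Cov(X)\|_\infty\leq 1/d$; the content of the lemma is the extra factor $1/n$, and the whole difficulty is to extract it.

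To do so I would use that the operator norm of a realignment equals its top singular value, which is the largest operator Schmidt coefficient of the underlying matrix, i.e.
\[ \|\Cov(X)\|_\infty=\frac1d\,\big\|\E(Z\otimes\overbar Z)^R\big\|_\infty=\frac1d\max_{\|G\|_2=\|H\|_2=1}\big|\tr\big((G^*\otimes H^*)\,\E(Z\otimes\overbar Z)\big)\big|, \]
with $G,H$ ranging over Hilbert-Schmidt-normalised operators on $(\C^n)^{\otimes 2k}$. Expanding both $P^{(2k)}$ and $P^{(k)}\otimes P^{(k)}$ through the Weingarten formula \eqref{eq:P^k-Weingarten} turns this maximisation into a finite sum over pairs of permutations, weighted by products of Weingarten coefficients and by the Hilbert-Schmidt overlaps of $G\otimes H$ with the operators $\ketbra{u_\pi}{u_\sigma}$. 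On the block permutations $\mathcal S_k\times\mathcal S_k\hookrightarrow\mathcal S_{2k}$, which are exactly those indexing $P^{(k)}\otimes P^{(k)}$, the leading power of $1/n$ of $\mathrm{Wg}(n,\cdot)$ factorises (by multiplicativity of $\mathrm{Mb}$ and additivity of $|\cdot|$ over blocks, via \eqref{eq:Weingarten-Moebius}) and matches the corresponding product of $k$-level Weingarten weights, so these contributions cancel up to relative order $1/n^2$. What survives comes either from those subleading corrections or from the genuinely connected permutations of $\mathcal S_{2k}$ that do not respect the block partition, and a cycle-counting argument shows each such term is suppressed by at least one factor $1/n$ relative to the product normalisation — which is what produces the claimed $1/n$ (and explains why the bound is, for $k=1$, weaker than the exact $1/(n^2-1)$ of Lemma \ref{lem:upsilon}).

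The main obstacle will be precisely this Weingarten bookkeeping: one has to track the powers of $n$ coming from the cycle structure of $\pi\sigma^{-1}$ and from the Gram overlaps $\langle u_\pi|u_\sigma\rangle$ carefully enough to see that the block contributions cancel to leading order and that every surviving term carries a net $1/n$, uniformly in the optimising pair $(G,H)$. Controlling the combinatorial prefactor is what yields the constant $(C_0k^4)^k$: there are at most $((2k)!)^2\leq(C_0k^4)^k$ pairs of permutations, and the associated Moebius/Catalan factors in \eqref{eq:Weingarten-Moebius} contribute at most exponentially in $k$. Estimating the operator norm rather than merely the Hilbert-Schmidt norm is exactly what forces this overlap-with-product-operators analysis, since $\|\E(Z\otimes\overbar Z)^R\|_2=\|\E(Z\otimes\overbar Z)\|_2=\sqrt{(2k)!-(k!)^2}$ carries no decay in $n$ at all.
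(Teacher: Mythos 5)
Your outline follows the paper's proof almost step for step: the reduction to $\tfrac1d\,\E(Z\otimes\overbar Z)^R$, the cancellation of cross terms via the $k$-design property, the Weingarten expansion of $\E(U^{\otimes 2k}\otimes\overbar U^{\otimes 2k})$ (up to a leg permutation), the splitting into block permutations $\pi=\pi_1\pi_2\in\mathcal S_k\times\mathcal S_k$ versus the rest, the cancellation of the block contributions against $P^{(k)}\otimes P^{(k)}$ to order $1/n^2$ via multiplicativity of $\mathrm{Mb}$ and additivity of $|\cdot|$, and the count $((2k)!)^2\leq (C_0k^4)^k$. Your observation that $\E(Z\otimes\overbar Z)$ is itself an orthogonal projector (because $E^{(k)}\otimes E^{(k)}$ embeds into the reordered $E^{(2k)}$) is correct and not in the paper, but as you say it only yields the trivial bound $1/d$.

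The genuine gap is the one step you defer to ``a cycle-counting argument'': the claim that every non-block term is suppressed by a factor $1/n$. Since $\mathrm{Wg}(n,\pi\sigma^{-1})=O(n^{-2k})$ with no further decay when $\pi=\sigma$, the needed statement is a norm bound on the realigned rank-one Weingarten term, namely that $\big\|\big(\ketbra{v_\pi}{v_\sigma}\big)^R\big\|_\infty\leq n^{2k-1}$ whenever $\pi$ or $\sigma$ fails to be block --- equivalently, in your variational formulation, that $|\langle G\otimes H,\ketbra{v_\pi}{v_\sigma}\rangle|\leq n^{2k-1}$ uniformly over Hilbert--Schmidt-normalised $G,H$. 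The trivial bound is only $\|\cdot\|_2=n^{2k}$, and equality in $\|\cdot\|_\infty\leq\|\cdot\|_2$ holds exactly in the block (rank-one) case, so nothing short of a structural analysis of the non-block case closes this. The paper does it by noting that the realignment factorises as $(\ketbra{v_\pi}{v_\sigma})^R=M_\pi\otimes M_\sigma$ and then writing $M_\pi$, for a $\pi$ with only $l\leq k-1$ indices crossing between the two blocks, as a permutation unitary times $n^l\,\Psi^{\otimes l}\otimes I^{\otimes 2(k-l)}$, whence $\|M_\pi\|_\infty=n^l\leq n^{k-1}$. You correctly identify this as the main obstacle, but identifying it is not the same as resolving it: as written, your argument establishes everything except the factor $1/n$, which is the entire content of the lemma.
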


\begin{proof}
	Observe that, denoting by $R$ the realignment operation between tensor factors $\{1,\dots,2k\}$ and $\{2k+1,\ldots,4k\}$,
	\begin{align*} 
		\Cov(X) & = \E\left(X\otimes\overbar{X}\right)^R \\
		& = \frac{1}{d^2} \sum_{s,t=1}^d \E\left(\left(U_s^{\otimes k}\otimes\overbar{U}_s^{\otimes k}-P^{(k)}\right)\otimes \left(\overbar{U}_t^{\otimes k}\otimes U_t^{\otimes k}-P^{(k)}\right)\right)^R \\
		& = \frac{1}{d^2} \sum_{s=1}^d \E\left(\left(U_s^{\otimes k}\otimes\overbar{U}_s^{\otimes k}-P^{(k)}\right)\otimes \left(\overbar{U}_s^{\otimes k}\otimes U_s^{\otimes k}-P^{(k)}\right)\right)^R \\
		& = \frac{1}{d} \E\left(\left(U^{\otimes k}\otimes\overbar{U}^{\otimes k}-P^{(k)}\right)\otimes \left(\overbar{U}^{\otimes k}\otimes U^{\otimes k}-P^{(k)}\right)\right)^R,
	\end{align*}
	where the third equality is because, for $1\leq s\neq t\leq d$, by independence of $U_s$ and $U_t$,
	\begin{align*}
		\E\left(\left(U_s^{\otimes k}\otimes\overbar{U}_s^{\otimes k}-P^{(k)}\right)\otimes \left(\overbar{U}_t^{\otimes k}\otimes U_t^{\otimes k}-P^{(k)}\right)\right)^R & = \left(\E\left(U_s^{\otimes k}\otimes\overbar{U}_s^{\otimes k}-P^{(k)}\right)\otimes \E\left(\overbar{U}_t^{\otimes k}\otimes U_t^{\otimes k}-P^{(k)}\right)\right)^R \\
		& = 0. 
	\end{align*}
	Now, we have
	\begin{align*} 
		& \E\left(\left(U^{\otimes k}\otimes\overbar{U}^{\otimes k}-P^{(k)}\right)\otimes \left(\overbar{U}^{\otimes k}\otimes U^{\otimes k}-P^{(k)}\right)\right) \\
		& \quad = \E\left(U^{\otimes k}\otimes\overbar{U}^{\otimes k}\otimes\overbar{U}^{\otimes k}\otimes U^{\otimes k}\right)-\E\left(U^{\otimes k}\otimes\overbar{U}^{\otimes k}\right)\otimes P^{(k)} - P^{(k)}\otimes\E\left(\overbar{U}^{\otimes k}\otimes U^{\otimes k}\right) + P^{(k)}\otimes P^{(k)} \\
		& \quad = \E\left(U^{\otimes k}\otimes\overbar{U}^{\otimes k}\otimes\overbar{U}^{\otimes k}\otimes U^{\otimes k}\right)-P^{(k)}\otimes P^{(k)}, 
	\end{align*}
	where the last equality is because $\E(U^{\otimes k}\otimes\overbar{U}^{\otimes k})=\E(\overbar{U}^{\otimes k}\otimes U^{\otimes k})=P^{(k)}$. Next, denoting by $S$ the swap operation between tensor factors $\{k+1,\ldots,2k\}$ and $\{3k+1,\ldots,4k\}$ (leaving tensor factors $\{1,\ldots,k\}$ and $\{2k+1,\ldots,3k\}$ unchanged), we have by equation \eqref{eq:P^k-Weingarten}
	\[ \E\left(U^{\otimes k}\otimes\overbar{U}^{\otimes k}\otimes\overbar{U}^{\otimes k}\otimes U^{\otimes k}\right) = \E\left(U^{\otimes 2k}\otimes\overbar{U}^{\otimes 2k}\right)^{S} = \sum_{\pi,\sigma\in\mathcal S_{2k}} \text{Wg}(n,\pi\sigma^{-1}) \ketbra{v_\pi}{v_\sigma}, \]
	where, given $\pi\in\mathcal S_{2k}$, the vector $v_\pi$ in $(\C^n)^{\otimes 2k}$ is defined as
	\[ \ket{v_\pi} = \sum_{1\leq i_1,\ldots,i_{2k}\leq n} \ket{i_1\cdots i_k i_{\pi(k+1)}\cdots i_{\pi(2k)}i_{\pi(1)}\cdots i_{\pi(k)}i_{k+1}\cdots i_{2k}} . \]
	And therefore,
	\[ \E\left(U^{\otimes k}\otimes\overbar{U}^{\otimes k}\otimes\overbar{U}^{\otimes k}\otimes U^{\otimes k}\right)^R = \sum_{\pi,\sigma\in\mathcal S_{2k}} \text{Wg}(n,\pi\sigma^{-1}) M_\pi\otimes M_\sigma, \]
	where, given $\pi\in\mathcal S_{2k}$, the matrix $M_\pi$ on $(\C^n)^{\otimes 2k}$ is defined as 
	\[ M_\pi = \sum_{1\leq i_1,\ldots,i_{2k}\leq n} \ketbra{i_1\cdots i_k i_{\pi(k+1)}\cdots i_{\pi(2k)}}{i_{\pi(1)}\cdots i_{\pi(k)} i_{k+1}\cdots i_{2k}}.  \]
	On the other hand, we have by equation \eqref{eq:P^k-Weingarten} again
	\[ \left(P^{(k)}\otimes P^{(k)}\right)^R = \sum_{\pi_1,\sigma_1,\pi_2,\sigma_2\in\mathcal S_{k}} \text{Wg}(n,\pi_1\sigma_1^{-1})\text{Wg}(n,\pi_2\sigma_2^{-1}) \ketbra{u_{\pi_1}}{u_{\pi_2}}\otimes\ketbra{u_{\sigma_1}}{u_{\sigma_2}}, \]
	where, given $\pi\in\mathcal S_{k}$, the vector $u_\pi$ in $(\C^n)^{\otimes k}$ is defined as
	\[ \ket{u_\pi} = \sum_{1\leq i_1,\ldots,i_{k}\leq n} \ket{i_1\cdots i_k i_{\pi(1)}\cdots i_{\pi(k)}} = n^k\ket{\psi_{1\pi(1)}\otimes\cdots\otimes\psi_{k\pi(k)}} . \]
	
	Let us now make a few observations. First, for any $\pi\in\mathcal S_{2k}$, 
	\begin{equation} \label{eq:norm-M_pi}
		\|M_\pi\|_\infty\leq\|M_\pi\|_2=n^k,
	\end{equation}
	where the equality in equation \eqref{eq:norm-M_pi} is because $M_\pi$ has exactly $n^{2k}$ entries equal to $1$ and all the others equal to $0$. Second,
	if $\pi\in\mathcal S_{2k}$ is of the form $\pi=\pi_1\pi_2$ with $\pi_1:\{k+1,\ldots,2k\}\to\{1,\ldots,k\}$ and $\pi_2:\{1,\ldots,k\}\to\{k+1,\ldots,2k\}$, then $M_\pi=\ketbra{u_{\pi_1}}{u_{\pi_2}}$, where we have identified $\pi_1,\pi_2$ as elements of $\mathcal S_k$. What is more, in the latter case, we clearly have $|\pi|=|\pi_1|+|\pi_2|$ and $\text{Mb}(\pi)=\text{Mb}(\pi_1)\text{Mb}(\pi_2)$. So by equation \eqref{eq:Weingarten-Moebius}, the dominating terms in $\text{Wg}(n,\pi)$ and $\text{Wg}(n,\pi_1)\text{Wg}(n,\pi_2)$ (which are equal to $\text{Mb}(\pi)/n^{2k+|\pi|}$ and $\text{Mb}(\pi_1)\text{Mb}(\pi_2)/n^{k+|\pi_1|+k+|\pi_2|}$ respectively) cancel out, and we have $\text{Wg}(n,\pi)-\text{Wg}(n,\pi_1)\text{Wg}(n,\pi_2)=O(1/n^{2k+|\pi|+2})$. Hence as a consequence, we can write
	\begin{equation} \label{eq:expectation-k-R} \left(\E\left(U^{\otimes k}\otimes\overbar{U}^{\otimes k}\otimes\overbar{U}^{\otimes k}\otimes U^{\otimes k}\right)-P^{(k)}\otimes P^{(k)}\right)^R = \sum_{\substack{\pi,\sigma\in\mathcal S_{2k} \\ \pi\neq\pi_1\pi_2\,\text{or}\,\sigma\neq\sigma_1\sigma_2}} \text{Wg}(n,\pi\sigma^{-1}) M_\pi\otimes M_\sigma + R , \end{equation}
	where we have defined
	\[ R = \sum_{\pi_1,\sigma_1,\pi_2,\sigma_2\in\mathcal S_{k}} \left( \text{Wg}(n,\pi_1\sigma_1^{-1}\pi_2\sigma_2^{-1}) - \text{Wg}(n,\pi_1\sigma_1^{-1})\text{Wg}(n,\pi_2\sigma_2^{-1}) \right)\ketbra{u_{\pi_1}}{u_{\pi_2}}\otimes\ketbra{u_{\sigma_1}}{u_{\sigma_2}}, \]
	so that $\|R\|_\infty\leq C(k!)^4/n^2$.
	
	Next, note that there is equality in the inequality in equation \eqref{eq:norm-M_pi} if and only if $M_\pi$ has rank $1$, which happens if and only if $\pi$ is of the form $\pi=\pi_1\pi_2$, and thus $M_\pi=\ketbra{u_{\pi_1}}{u_{\pi_2}}$. We actually claim that, if $\pi\neq\pi_1\pi_2$, then $\|M_\pi\|_\infty\leq n^{k-1}$. Indeed, suppose that there exists $l\leq k-1$ such that $\pi$ has exactly $l$ elements of $\{1,\ldots,k\}$ that have elements of $\{k+1,\ldots,2k\}$ as images, and hence as well exactly $l$ elements of $\{k+1,\ldots,2k\}$ that have elements of $\{1,\ldots,k\}$ as images. Then, there exist $x_1,\ldots,x_l$, resp.~$x_1',\ldots,x_l'$, distinct elements of $\{1,\ldots,k\}$, resp.~$\{k+1,\ldots,2k\}$, and $\alpha$, resp.~$\beta$, permutation of $\{1,\ldots,k\}$, resp.~$\{k+1,\ldots,2k\}$, such that
	\[ M_\pi = \left(U^{(\alpha)}_{\{1,\ldots,k\}}\otimes U^{(\beta)}_{\{k+1,\ldots,2k\}}\right)\left(n^l \Psi_{x_1x_1'}\otimes\cdots\otimes\Psi_{x_lx_l'}\otimes I_{\{1,\ldots,2k\}\setminus\{x_1,x_1',\ldots,x_l,x_l'\}}\right), \]
	where $U^{(\alpha)},U^{(\beta)}$ denote the permutation unitaries associated to $\alpha,\beta$. We thus have by unitary invariance 
	\[ \left\|M_\pi\right\|_\infty = n^l \left\|\Psi^{\otimes l}\otimes I^{\otimes 2(k-l)}\right\|_\infty = n^l \leq n^{k-1}. \]
	This implies that, for any $\pi,\sigma\in\mathcal S_{2k}$ such that $\pi\neq\pi_1\pi_2$ or $\sigma\neq\sigma_1\sigma_2$, $\|M_\pi\otimes M_\sigma\|_\infty\leq n^{2k-1}$.
	Using this observation, and recalling that, for any $\pi,\sigma\in\mathcal S_{2k}$, $\text{Wg}(n,\pi\sigma^{-1})=O(1/n^{2k})$, we get from equation \eqref{eq:expectation-k-R} that
	\[ \left\|\left(\E\left(U^{\otimes k}\otimes\overbar{U}^{\otimes k}\otimes\overbar{U}^{\otimes k}\otimes U^{\otimes k}\right)-P^{(k)}\otimes P^{(k)}\right)^R\right\|_\infty \leq \frac{C'((2k)!)^2}{n} + \frac{C(k!)^4}{n^2} \leq \frac{(C''k^4)^{k}}{n} . \]

	Putting everything together, we indeed obtain
	\[ \left\|\Cov(X)\right\|_\infty \leq \frac{1}{d}\times\frac{(C''k^4)^{k}}{n}. \]
\end{proof}

\begin{lemma} \label{lem:Rp-k}
	Let $X=\sum_{s=1}^d Z_s$ be a random matrix on $(\mathbb C^n)^{\otimes 2k}$ defined as in equation \eqref{eq:def-X-k}, where $Z_s=(U_s^{\otimes k}\otimes\overbar{U}_s^{\otimes k}-P^{(k)})/d$, $1\leq s\leq d$. Then,
	\[ \max_{1\leq s\leq d} \left\|Z_s\right\|_\infty = \frac{1}{d}. \]
\end{lemma}

\begin{proof}
	Observe that
	\[ \max_{1\leq s\leq d} \left\|Z_s\right\|_\infty = \frac{1}{d}\,\max_{1\leq s\leq d} \left\|U_s^{\otimes k}\otimes\overbar{U}_s^{\otimes k}-P^{(k)}\right\|_\infty = \frac{1}{d}\left\|U^{\otimes k}\otimes\overbar{U}^{\otimes k}-P^{(k)}\right\|_\infty. \]
	Now, we have
	\begin{align*}
		\left\|U^{\otimes k}\otimes\overbar{U}^{\otimes k}-P^{(k)}\right\|_\infty^{1/2} & = \left\|\left(U^{\otimes k}\otimes\overbar{U}^{\otimes k}-P^{(k)}\right)\left(U^{\otimes k}\otimes\overbar{U}^{\otimes k}-P^{(k)}\right)^*\right\|_\infty \\
		& = \left\|(UU^*)^{\otimes k}\otimes(\overbar{U}\overbar{U}^*)^{\otimes k}-(U^{\otimes k}\otimes\overbar{U}^{\otimes k})P^{(k)}-P^{(k)}(U^{\otimes k}\otimes\overbar{U}^{\otimes k})^*+P^{(k)}\right\|_\infty \\
		& = \left\|I-P^{(k)}\right\|_\infty \\
		& = 1,
	\end{align*}
	where the third equality is because $(U^{\otimes k}\otimes\overbar{U}^{\otimes k})P^{(k)}=P^{(k)}(U^{\otimes k}\otimes\overbar{U}^{\otimes k})^*=P^{(k)}$ and $UU^*=\overbar{U}\overbar{U}^*=I$. So we indeed have
	\[ \max_{1\leq s\leq d} \left\|Z_s\right\|_\infty = \frac{1}{d}. \]
\end{proof}

Just as in the case $k=1$, we point out that, for the results of Lemmas \ref{lem:sigma-k} and \ref{lem:Rp-k} to hold, it is actually enough that the operators $U_1,\ldots,U_d\in U(n)$ are independent unitaries, without any extra assumption on their distribution. It is only for Lemma \ref{lem:upsilon-k} that we need them to be sampled from a $2k$-design.

With Lemmas \ref{lem:sigma-k}, \ref{lem:upsilon-k} and \ref{lem:Rp-k} at hand, we are now ready to state the main result of this section.

\begin{theorem} \label{th:main-k}
	Let $X$ be a random matrix on $\mathbb C^n\otimes\mathbb C^n$ defined as in equation \eqref{eq:def-X-k}. Suppose that $d\geq(\log n)^{4+\epsilon}$ for some $\epsilon>0$. Then,
	\[ \E\|X\|_\infty \leq \frac{2}{\sqrt{d}}\left(1+\frac{C_k}{(\log n)^{\epsilon/6}}\right), \]
	where $C_k<\infty$ is a constant that depends only on $k$. What is more, 
	\[ \P\left( \|X\|_\infty \leq \frac{2}{\sqrt{d}}\left(1+\frac{C_k'}{(\log n)^{\epsilon/6}}\right) \right) \geq 1-\frac{1}{n^k}, \]
	where $C_k'<\infty$ is a constant that depends only on $k$.
\end{theorem}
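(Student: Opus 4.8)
The plan is to follow the exact same template already used to prove Theorem \ref{th:main} in the case $k=1$: feed the three parameter estimates from Lemmas \ref{lem:sigma-k}, \ref{lem:upsilon-k} and \ref{lem:Rp-k} into the matrix concentration result of Theorem \ref{th:vanHandel}, applied with ambient dimension $N=n^{2k}$ and with $p\to\infty$ (which is legitimate since $R_p(X)$ has a finite limit as $p\to\infty$). First I would record the three quantities explicitly:
\[ \sigma(X)=\frac{1}{\sqrt{d}}, \qquad \upsilon(X)\leq\frac{1}{\sqrt{d}}\times\frac{(C_0k^4)^{k/2}}{\sqrt{n}}, \qquad \lim_{p\to\infty}R_p(X)=\frac{s_1(Z_1)}{\sqrt{d}}\leq\frac{1}{d}, \]
where the last bound uses that $\|Z_1\|_\infty=\|U^{\otimes k}\otimes\overbar U^{\otimes k}-P^{(k)}\|_\infty/d\leq 1/d$ (as $I-P^{(k)}$ is a projector, so $|Z_1|^{2p}=(I-P^{(k)})/d^{2p}$ has operator norm $1/d^{2p}$). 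The key point is that the leading term $\|\E(XX^*)\|_\infty^{1/2}+\|\E(X^*X)\|_\infty^{1/2}=2/\sqrt d$ is exactly the optimal expander value, and everything else must be shown to be a lower-order correction.

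Next I would substitute into the first (in-expectation) bound of Theorem \ref{th:vanHandel}, with $\log N = 2k\log n$. The three correction terms become, up to $k$-dependent constants absorbed into $C_k$, of the orders $(\log n)^{3/4}d^{-1/2}n^{-1/2}$, $(\log n)^{2/3}d^{-1/2}d^{-1/6}=(\log n)^{2/3}d^{-1/2}d^{-1/6}$, and $(\log n)d^{-1/2}d^{-1/2}$. Each carries the common factor $1/\sqrt d$, so after factoring it out the corrections are $(\log n)^{3/4}n^{-1/2}$, $(\log n)^{2/3}d^{-1/6}$, and $(\log n)d^{-1/2}$. Under the hypothesis $d\geq(\log n)^{4+\epsilon}$ the dominant correction is the middle one, of order $(\log n)^{2/3}d^{-1/6}\leq(\log n)^{2/3-(4+\epsilon)/6}=(\log n)^{-\epsilon/6}$, exactly as in the $k=1$ case, giving the claimed $\E\|X\|_\infty\leq (2/\sqrt d)(1+C_k(\log n)^{-\epsilon/6})$.

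For the high-probability statement I would apply the second (deviation) bound of Theorem \ref{th:vanHandel} with the choice $t=3\log n$, so that $N e^{-t}=n^{2k}e^{-3\log n}=n^{2k-3}$; here I must take a little care since for the probability to exceed $1-1/n$ I need $n^{2k-3}\leq 1/n$, i.e.\ I should instead set $t=(2k+1)\log n$ to get failure probability $\leq n^{-1}$. The extra $\log n$ factors from $t^{2/3}$, $t$, $t^{1/2}$ only multiply the correction terms by $k$-dependent polylogarithmic factors, which under the stronger hypothesis $d\geq(\log n)^{8+\epsilon}$ are still dominated: the $(\log n)\,R_p(X)\,t$ term contributes order $(\log n)^2 d^{-1/2}$ and the $(\log n)^{2/3}\sigma^{2/3}R_p^{1/3}t^{2/3}$ term contributes order $(\log n)^{4/3}d^{-1/6}\leq(\log n)^{4/3-(8+\epsilon)/6}=(\log n)^{-\epsilon/6}$, which sets the rate. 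The main obstacle is purely bookkeeping: verifying that with the enlarged $t$ (needed to beat $N=n^{2k}$ rather than $n^2$) all correction terms remain $o(1)$ and collapse to the single rate $(\log n)^{-\epsilon/6}$, absorbing the genuinely $k$-dependent constants $(C_0k^4)^{k/2}$ and $((2k)!)$-type factors into $C_k,C_k'$; no new conceptual difficulty arises, since the structure is identical to the $k=1$ proof and the $k$-dependence sits entirely inside constants that do not interact with the $n$ and $d$ asymptotics.
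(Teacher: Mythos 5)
Your proposal is correct and follows essentially the same route as the paper: plug the estimates of Lemmas \ref{lem:sigma-k}, \ref{lem:upsilon-k} and \ref{lem:Rp-k} into Theorem \ref{th:vanHandel} with ambient dimension $N=n^{2k}$ and $p$ of order $\log N$ (so that $(dn^{2k})^{1/2p}=O(1)$), then absorb all $k$-dependence into $C_k,C_k'$. You even correctly flag that $t$ must be enlarged to $(2k+1)\log n$ to beat the failure probability $Ne^{-t}=n^{2k}e^{-t}$ (the paper's own proof carries over $n^2e^{-t}$ and $t=3\log n$ from the $k=1$ case, which is insufficient for $k\geq 2$); your only slips are cosmetic and harmless --- the first correction term scales as $(\log n)^{3/4}n^{-1/4}d^{-1/2}$ rather than $n^{-1/2}d^{-1/2}$ since here $\upsilon(X)\lesssim n^{-1/2}d^{-1/2}$, and $\lim_{p\to\infty}R_p(X)$ is bounded by $1/d$ directly rather than equal to $s_1(Z_1)/\sqrt{d}$.
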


\begin{proof}
	By Lemmas \ref{lem:sigma-k}, \ref{lem:upsilon-k} and \ref{lem:Rp-k}, respectively, we can estimate the parameters $\sigma(X)$, $\upsilon(X)$ and $R(X)$ defined in Theorem \ref{th:vanHandel}. We have
	\[ \sigma(X)=\frac{1}{\sqrt{d}}, \quad \upsilon(X)\leq\frac{1}{\sqrt{d}}\times\frac{(C_0k^4)^{k/2}}{\sqrt{n}}, \quad R(X)=\frac{1}{d}. \]
	Applying the first statement in Theorem \ref{th:vanHandel}, we thus get, setting $C_0'=C_0^{1/4}$,
	\[ \E\|X\|_\infty \leq \frac{2}{\sqrt{d}} + \frac{C}{\sqrt{d}}\left( \frac{(\log n)^{3/4}(C_0'k)^k}{n^{1/4}} + \frac{(\log n)^{2/3}}{d^{1/6}} + \frac{\log n}{d^{1/2}} \right). \]
	Hence, for $d\geq(\log n)^{4+\epsilon}$,
	\[ \E\|X\|_\infty \leq \frac{2}{\sqrt{d}} + \frac{C}{\sqrt{d}}\times\frac{3(C_0'k)^k}{(\log n)^{\epsilon/6}}, \]
	which is exactly the first statement in Theorem \ref{th:main} (up to relabeling $3(C_0'k)^kC/2$ into $C_k$).
	
	To prove the second statement in Theorem \ref{th:main}, we just have to apply the second statement in Theorem \ref{th:vanHandel}, to obtain, setting again $C_0'=C_0^{1/4}$,
	\[ \forall\ t>0,\ \P\left( \|X\|_\infty \leq \frac{2}{\sqrt{d}} + \frac{C}{\sqrt{d}}\left( \frac{(\log n)^{3/4}(C_0'k)^k}{n^{1/4}} + \frac{t^{2/3}}{d^{1/6}} + \frac{t}{d^{1/2}} + \frac{(C_0'k)^{2k}t^{1/2}}{n^{1/2}} \right) \right) \geq 1-4n^{2k}e^{-t}. \]
	Applying the above to $t=5k\log n$, we have that, for $d\geq(\log n)^{4+\epsilon}$,
	\[ \P\left( \|X\|_\infty \leq \frac{2}{\sqrt{d}} + \frac{C}{\sqrt{d}}\times\frac{4\times(5k)^{1/2}(C_0'k)^{2k}}{(\log n)^{\epsilon/6}} \right) \geq 1-\frac{1}{n^k}, \]
	which implies exactly the announced result (setting $C_k'=2\times(5k)^{1/2}(C_0'k)^{2k}C$).
\end{proof}

\subsection{Implication concerning random `mixture of tensor power unitaries' quantum channels} \hfill\vspace{0.1cm}

Let $\mu$ be a $2k$-design on $U(n)$ and let $U_1,\ldots,U_d\in U(n)$ be independently sampled from $\mu$. Define the random CP map
\begin{equation} \label{eq:def-Phi-k}
	\Phi:Y\in\mathcal M_{n^k}(\mathbb C)\mapsto \frac{1}{d}\sum_{s=1}^d U_s^{\otimes k}YU_s^{*\otimes k}\in\mathcal M_{n^k}(\mathbb C).
\end{equation}
$\Phi$ is by construction TP and unital (because $U_s^{*\otimes k}U_s^{\otimes k}=U_s^{\otimes k}U_s^{*\otimes k}=I$ for each $1\leq s\leq d$). Additionally, since $\mu$ is in particular a $k$-design, $\E(\Phi)=\Pi^{(k)}$.

\begin{corollary} \label{th:main-implication-k}
	Let $\Phi$ be a random unital quantum channel on $\mathcal M_{n^k}(\mathbb C)$ defined as in equation \eqref{eq:def-Phi-k}. Suppose that $d\geq(\log n)^{4+\epsilon}$ for some $\epsilon>0$. Then,
	\[ \E\left\|\Phi-\Pi^{(k)}\right\|\infty \leq \frac{2}{\sqrt{d}}\left(1+\frac{C_k}{(\log n)^{\epsilon/6}}\right), \]
	where $C_k<\infty$ is a constant that depends only on $k$. What is more, 
	\[ \P\left( \left\|\Phi-\Pi^{(k)}\right\|_\infty \leq \frac{2}{\sqrt{d}}\left(1+\frac{C_k'}{(\log n)^{\epsilon/6}}\right) \right) \geq 1-\frac{1}{n^k}, \]
	where $C_k'<\infty$ is a constant that depends only on $k$.
\end{corollary}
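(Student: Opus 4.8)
The plan is to deduce Corollary \ref{th:main-implication-k} immediately from Theorem \ref{th:main-k}, exactly as Corollary \ref{th:main-implication} was deduced from Theorem \ref{th:main} in the case $k=1$. The only conceptual ingredient needed is the identification, recalled at the start of the paper, of a CP map $\Phi$ on $\mathcal M_{n^k}(\mathbb C)$ with its matrix version $M_\Phi\in\mathcal M_{n^{2k}}(\mathbb C)$, under which singular values (and hence all Schatten norms) are preserved. So the whole argument reduces to rewriting the quantity $\|\Phi-\Pi^{(k)}\|_\infty$ as the operator norm of the centered random matrix $X$ from equation \eqref{eq:def-X-k}, for which Theorem \ref{th:main-k} already supplies both the expectation bound and the concentration bound.

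Concretely, the key step is the chain of identities
\[ \left\|\Phi-\Pi^{(k)}\right\|_\infty = \left\|M_\Phi-P^{(k)}\right\|_\infty = \|X\|_\infty. \]
The first equality holds because $M_\Phi=\frac{1}{d}\sum_{s=1}^d U_s^{\otimes k}\otimes\overbar{U}_s^{\otimes k}$ is the matrix version of $\Phi$, while $P^{(k)}$ is the matrix version of $\Pi^{(k)}$, and the map $\Phi\mapsto M_\Phi$ preserves the $\infty$-norm. The second equality holds because $\mu$ is in particular a $k$-design, so $\E(U_s^{\otimes k}\otimes\overbar{U}_s^{\otimes k})=P^{(k)}$ for each $s$, whence $M_\Phi-P^{(k)}$ is exactly the matrix $X$ defined in equation \eqref{eq:def-X-k}.

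With this identification in place, I would simply invoke the two statements of Theorem \ref{th:main-k}: the first yields the bound $\E\|\Phi-\Pi^{(k)}\|_\infty\leq \frac{2}{\sqrt{d}}(1+C_k/(\log n)^{\epsilon/6})$ under $d\geq(\log n)^{4+\epsilon}$, and the second yields the high-probability bound with constant $C_k'$ under $d\geq(\log n)^{8+\epsilon}$. There is essentially no remaining obstacle here: all the genuine difficulty was already absorbed into Theorem \ref{th:main-k}, which in turn rests on the estimates of $\sigma(X)$, $\upsilon(X)$ and $R_p(X)$ from Lemmas \ref{lem:sigma-k}, \ref{lem:upsilon-k} and \ref{lem:Rp-k}, the only one of which requiring the full $2k$-design hypothesis being the covariance bound of Lemma \ref{lem:upsilon-k}. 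Thus the corollary is purely a restatement of the theorem at the level of channels rather than their matrix versions.
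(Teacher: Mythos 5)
Your proposal is correct and follows exactly the same route as the paper's own proof: identify $\left\|\Phi-\Pi^{(k)}\right\|_\infty=\left\|M_\Phi-P^{(k)}\right\|_\infty=\|X\|_\infty$ via the norm-preserving matrix representation and the $k$-design property, then apply both statements of Theorem \ref{th:main-k}. No gaps; the justifications you give for the two equalities are precisely the ones the paper relies on.
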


The above result can be rephrased as follows: for $k$ fixed and as soon as $d\gg (\log n)^4$, with probability at least $1-1/n^k$, the random unital quantum channel $\Phi$ on $\mathcal M_{n^k}(\mathbb C)$ is a $(d,2(1+\delta_n)/\sqrt{d})$ $k$-copy tensor product expander, with $\delta_n\to 0$ as $n\to\infty$. This means that it is, in particular, an optimal expander, in the sense of equation \eqref{def:optimal-expander} (whose Kraus operators additionally have the property of being of $k$-copy tensor power form).

Note also that, as discussed in the introduction, the conclusion of Corollary \ref{th:main-implication-k} can be equivalently written as 
\[ \P\left( s(\Phi) \leq \frac{2}{\sqrt{d}}\left(1+\frac{C_k'}{(\log n)^{\epsilon/6}}\right) \right) \geq 1-\frac{1}{n^k}, \]
and thus a fortiori 
\[ \P\left( \lambda(\Phi) \leq \frac{2}{\sqrt{d}}\left(1+\frac{C_k'}{(\log n)^{\epsilon/6}}\right) \right) \geq 1-\frac{1}{n^k}. \]

\begin{proof}
	Corollary \ref{th:main-implication-k} is an immediate consequence of Theorem \ref{th:main-k}. Indeed,
	\[ \left\|\Phi-\Pi^{(k)}\right\|_\infty = \left\|M_\Phi-P^{(k)}\right\|_\infty = \|X\|_\infty, \]
	for $X$ a random matrix on $\mathbb C^{n^k}\otimes\mathbb C^{n^k}$ defined as in equation \eqref{eq:def-X-k}. And Theorem \ref{th:main-k} gives us precisely an estimate on the average of the operator norm and the probability of deviating from it for such random matrix $X$.
\end{proof}

\section{Summary and perspectives}
\label{sec:summary}

One of the main open questions around quantum expanders is to come up with explicit constructions of optimal ones (or in fact even just close to optimal ones). Indeed, all known examples of $(d,C/\sqrt{d})$ expanders are random constructions. This work can be seen as a step towards derandomization, since it shows that sampling unitary Kraus operators from a much simpler distribution than the Haar measure, namely a $2$-design, already provides on optimal expander. For instance, we can now assert that a random unital quantum channel on $(\C^2)^{\otimes N}$ whose Kraus operators are $d\geq\mathrm{poly}(N)$ unitaries picked independently and uniformly from the finite $N$-qubit Clifford group is with high probability an optimal expander. It could thus be that exhaustive search for an explicit example of optimal expander is getting within reach. One question that remains unanswered is whether the same would be true for a random unital quantum channel on $\C^n$ whose Kraus operators are $d\geq\mathrm{poly}(\log n)$ unitaries picked independently and uniformly from the finite $d$-dimensional generalized Pauli group. Indeed, the latter is only a $1$-design, not a $2$-design, so our result does not apply. It is however not clear whether the need for imposing a $2$-design condition is just an artifact of the proof strategy (in order to control the covariance matrix) or whether there is a true obstruction for the result to hold with only a $1$-design condition. In fact, the same question can be asked in the general case: is sampling from a $2k$-design really necessary or could sampling from a $k$-design be enough in order to obtain with high probability an optimal $k$-copy tensor product expander?

Another natural question would be whether our results still hold when sampling the Kraus operators from $\mu$ an approximate rather than exact $2k$-design on $U(n)$. When we look into the proofs, we see that what we actually need are upper bounds of the form $\|P_\mu^{(k)}-P^{(k)}\|_\infty\leq C$ and $\|(P_\mu^{(2k)}-P^{(2k)})^R\|_\infty,\|(P^{(k)}\otimes[P_\mu^{(k)}-P^{(k)}])^R\|_\infty\leq C'/m_n$ with $m_n\gg(\log n)^3$. Indeed, if these conditions hold, we can conclude that the corresponding random unital quantum channel is typically a $(d,C''/\sqrt{d})$ $k$-copy tensor product expander, i.e.~an optimal $k$-copy tensor product expander up to a constant multiplicative factor. Now, it was recently proven in \cite{HHS24} (following a series of works over the past decade \cite{Brandao16,Harrow23,Haferkamp22}) that, for various distributions $\mu_{N,\ell}$ of random depth-$\ell$ $N$-qubit circuits, we have $\|P_{\mu_{N,\ell}}^{(k)}-P^{(k)}\|_\infty\leq\varepsilon$ as soon as $\ell\geq \log(N/\varepsilon)\mathrm{poly}(k)$. Replacing $\varepsilon$ by $\varepsilon/4^{Nk}$ in the above approximation result (and applying it to $2k$ instead of $k$) guarantees that, for $\ell\geq \mathrm{poly}(N,k)\log(1/\varepsilon)$, we also have $\|(P_{\mu_{N,\ell}}^{(2k)}-P^{(2k)})^R\|_\infty,\|(P^{(k)}\otimes[P_{\mu_{N,\ell}}^{(k)}-P^{(k)}])^R\|_\infty\leq\varepsilon$. This is because the realignment on $\C^{4^{Nk}}\otimes\C^{4^{Nk}}$ can increase the $\infty$-norm by at most $4^{Nk}$. Hence, our peculiar notion of approximation is indeed fulfilled by distributions that can be efficiently generated, namely random circuits of depth $\mathrm{poly}(\log n,k)$. This is one more argument proving that our random constructions of optimal tensor product expanders are easy to sample, and hence interesting in practice.

In this work, we have chosen to consider only the regime where the number of copies $k$ is fixed while the local dimension $n$ grows. We have therefore not made any effort to try and optimize the dependence on $k$ of the constants $C_k,C_k'$ appearing in Theorem \ref{th:main-k} and Corollary \ref{th:main-implication-k}. When looking into the proofs, we see that they scale respectively as $k^k,k^{2k+1/2}$. This means that our results as they are remain interesting as long as $k$ grows slower than $\log\log n$. This could potentially be improved by dealing less roughly with the combinatorics of permutations in the proof of Lemma \ref{lem:upsilon-k}.

As a final, more general, observation, note that this work's overall philosophy has already been followed before, in slightly different contexts. It basically consists in asking: knowing that a random construction based on Haar distributed unitaries satisfies with high probability a given property of interest, is the same true when sampling those unitaries according to a more easily implementable approximation of the Haar measure, such as a $k$-design measure? Such partial derandomization question has been explored for problems very closely related to the one we consider in this work, namely: approximating $\Pi^{(k)}$ in other norms than the $2{\to}2$ norm that shows up here, such as the $1{\to}1$ or $1{\to}\infty$ norms (together with their completely bounded versions) that are for instance relevant in quantum cryptography \cite{Aubrun09,Lancien20}. It has also been looked at in relation to non-additivity problems in quantum information, with the main technical tool being a version of Dvoretzky's theorem valid for $k$-design rather than Haar distributed random subspaces \cite{Nema22}.

\subsection*{Funding} At the time this research occurred, I was supported by the project ESQuisses of the Agence Nationale de la Recherche (grant number ANR-20-CE47-0014-01).

\subsection*{Acknowledgments} I would like to thank Guillaume Aubrun for the first blackboard discussions that launched this investigation. These took place during a research term at ICMAT (Madrid, Spain) so I am also thankful to the institution and the organizers for the hospitality. Finally, I am grateful to Motohisa Fukuda for prompting me to wrap up the draft I had left hanging, and for the nice discussions around it afterwards. At the time this research occurred, I was supported by the ANR project ESQuisses (grant number ANR-20-CE47-0014-01).
	
\addcontentsline{toc}{section}{References}

\end{document}